\renewcommand{\emph}[1]{\textit{#1}}
\definecolor{brown}{cmyk}{0, 0.72, 1, 0.45}
\definecolor{grey}{gray}{0.5}
\def\dK{\vec{K}}
\newcommand{\old}[1]{}
\newcounter{rot}
\def\hr{\widehat{\rho}}
\newcommand{\ignore}[1]{}
\def\cA{{\mathcal A}}
\def\cS{{\mathcal S}}
\newcommand{\set}[1]{\left\{#1\right\}}
\def\ii_(#1,#2){i_{#1}^{#2}}
\def\a{\alpha}
\def\b{\beta}
\def\d{\delta}
\def\e{\varepsilon}
\def\f{\phi}
\def\g{\gamma}
\def\G{\Gamma}
\def\k{\kappa}
\def\z{\zeta}
\def\th{\theta}
\def\l{\lambda}
\def\m{\mu}
\def\n{\nu}
\def\r{\rho}
\def\s{\sigma}
\def\t{\tau}
\def\om{\omega}
\def\cE{\mathcal{E}}
\def\cF{\mathcal{F}}
\newcommand{\brac}[1]{\left( #1 \right)}
\newcommand{\expect}{\operatorname{\bf E}}
\def\E{\expect}
\def\Var{{\bf Var}}
\renewcommand{\Pr}{\operatorname{\bf Pr}}
\newcommand\bfrac[2]{\left(\frac{#1}{#2}\right)}
\newtheorem{theorem}{Theorem}[section]
\newtheorem{lemma}[theorem]{Lemma}
\newtheorem{corollary}[theorem]{Corollary}
\newtheorem{remthm}[theorem]{Remark}
\newenvironment{remark}{\begin{remthm}\rm }{\end{remthm}}%
\newcounter{thmtemp}
\newcommand{\nospace}[1]{}
\def\path{\operatorname{PATH}}
\newcommand{\beq}[2]{\begin{equation}\label{#1}#2\end{equation}}
\begin{document}
\title{Online purchasing under uncertainty}
\author{Alan Frieze}
\thanks{Research supported in part by NSF Grants DMS1362785, CCF1522984 and a grant(333329) from the Simons Foundation}
\author{Wesley Pegden}
\thanks{Research supported in part by NSF grant
    DMS1363136}
\address{Department of Mathematical
  Sciences,\\Carnegie Mellon University,\\Pittsburgh PA 15213.}

\begin{abstract}
Suppose there is a collection $x_1,x_2,\dots,x_N$ of independent uniform $[0,1]$ random variables, and a hypergraph $\cF$ of \emph{target structures} on the vertex set $\{1,\dots,N\}$.  We would like to purchase a target structure at small cost, but we do not know all the costs $x_i$ ahead of time.  Instead, we inspect the random variables $x_i$ one at a time, and after each inspection, choose to either keep the vertex $i$ at cost $x_i$, or reject vertex $i$ forever.

In the present paper, we consider the case where $\{1,\dots,N\}$ is the edge-set of a complete graph (or digraph), and the target structures are the spanning trees of a graph, spanning arborescences of a digraph, the paths between a fixed pair of vertices, perfect matchings, Hamilton cycles or the cliques of some fixed size.  
\end{abstract}

\maketitle

\section{Introduction}
Suppose we inspect independent and uniform $[0,1]$ random variables $x_1,x_2,\dots,x_N$ one at a time.  After each $i$th inspection, we decide to pay the cost $x_i$ to purchase the index $i$, or pass. If we \emph{must} purchase some $i$, what is the minimum expected cost of an optimal strategy?  This problem is closely related to the well-studied and well-generalized secretary problem (see for example \cite{secprob}, \cite{BIK} and \cite{KK}), and is attributed by Ferguson \cite{Fer} to Cayley \cite{Cay}. It was solved by Moser \cite{Mos}. (In Section \ref{s.lists}, we will see that an optimum strategy pays asymptotically $\frac 2 N$ in expectation.)

The focus of the present paper concerns the more general problem, where we must purchase not just one $i$, but a set $I$ belonging to some \emph{target hypergraph} $\cF$ on the vertex set $\{1,\dots,N\}$.  We query the costs $x_i$ in some order, and since we imagine that querying $x_i$ makes its availibility unstable in some way, we must again decide on each step whether to purchase the queried $x_i$, or pass on $x_i$ forever.  We are interested in the minimum expected cost paid by an optimal algorithm which always succeeds at purchasing every element of at least one hyper-edge of $\cF$, which we call the \emph{purchase price} of $\cF$.  

Presumably, the order in which the $x_i$'s are examined should have a large effect on the expected price paid for structures.  We will in fact concern ourselves with three distinct models, in which we have progressively less control over the order of inspection.

{\bf Purchaser Ordered Online Model -- POM:} In this model, at each step,  we are allowed to inspect any uninspected $x_i$ to see if we wish to purchase it. It is therefore an on-line model where we choose the order of the items.

{\bf Randomly Ordered Online Model -- ROM:} In this model the $x_i$ are presented to us in random order $x_{i_1},\dots,x_{i_N}$.  At each step $t$ we learn the index $i_t$ and the associated cost $x_{i_t}$, and must decide immediately whether or not to purchase $e_i$.

{\bf Adversarially Ordered Online Model -- AOM} In this model, the order of inspections is determined by an {\em adaptive} adversary. (The adversary does not know the costs of uninspected $x_i$'s.)

Note that when evaluating the purchase price of $\cF$, the expected cost is computed just over a probability space of weights for POM and AOM, but in a product space of weights and edge-orderings, for ROM.

Surprisingly, for the problems we consider, we achieve have upper bounds in ROM which are close to our POM lower bound. In particular, the effect of optimum control over the order of inspection (versus random order) is small for these problems. We will also show that some of the upper bounds in ROM can also be achieved even in AOM. On the other hand, there will be cases where there is a substantial gap between what can be achieved in these models.

For most of the paper $x_i$ will be the cost of the edge $e_i$ of the complete graph (or digraph) on $n$ vertices, in which case $N=\binom{n}{2}$ (or $n(n-1)$). Each hyper-edge of $\cF$ will correspond to a particular desired graphical structure. As an example, suppose that the hyper-edges of $\cF$ correspond to the edge-sets of paths between vertices $1$ and $2$. A minimum cost hyper-edge of $\cF$ corresponds to a shortest path between vertices $1$ and $2$ in $K_n$. We know that w.h.p.\footnote{A sequence of events $\cE_n,\,n=1,2,\ldots$ is said to occur with high probability (w.h.p.) if $\lim_{n\to\infty}\Pr(\cE_n)=1$.} this is asymptotically $\frac{\log n}{n}$\cite{Janson}. We will prove (see Theorem \ref{thm:spaths}) that even in the POM framework, an optimal strategy to purchase such a path pays much more in expectation.

We will begin our discussion though with something closely related to the famous ``secretary problem''. In section \ref{s.lists} we consider the particularly simple case of determining the purchase price $\r_{N,k}$ where $\cF$ is the complete $k$-uniform hypergraph on $N$ vertices; in other words, the task is simply to purchase any $k$ items. (Note that the $S_n$ symmetry of the hypergraph means that POM and ROM are equivalent for this hypergraph.)  This case will arise as a tool in many of our more complicated analyses. We call this the {\em $k$-purchase problem}.
\begin{theorem}\label{rhoNk}
For $k\geq 1$ we have
\beq{ckdef}{
\r_{N,k}\approx\frac{c_k}{N}\text{ where $c_1=2$ and $c_k=c_{k-1}+1+\sqrt{1+2c_{k-1}}$ for $k\geq 2$.}
}
\end{theorem}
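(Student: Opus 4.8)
The plan is to reduce the $k$-purchase problem to a one-dimensional dynamic program and then solve its recursion asymptotically by induction on $k$. Since the target hypergraph is invariant under all permutations of $\{1,\dots,N\}$, the order of inspection is irrelevant, so POM, ROM and AOM all give the same value, and $\r_{N,k}$ equals the value of the following Markov decision process. Let $W_k(m)$ be the optimal expected remaining cost when $m$ items are still uninspected and $k$ of them must still be purchased; thus $W_0(m)=0$, $W_k(k)=k/2$, and, writing $x$ for a fresh uniform $[0,1]$ cost, the Bellman equation is
\[
W_k(m)=\E\,\min\bigl\{\,x+W_{k-1}(m-1),\ W_k(m-1)\,\bigr\}.
\]
Hence the optimal strategy buys the inspected item exactly when its cost is at most $t_k(m):=W_k(m-1)-W_{k-1}(m-1)$ (and $t_k(m)\ge 0$ since $W_k\ge W_{k-1}$, an easy induction). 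For $m$ large enough that $t_k(m)\le 1$, evaluating the integral gives the reduced recursion $W_k(m)=W_{k-1}(m-1)+t_k(m)-\tfrac12 t_k(m)^2$; using $t_k(m+1)=W_k(m)-W_{k-1}(m)$ and $W_{k-1}(m-1)-W_{k-2}(m-1)=t_{k-1}(m)$ to combine the $k$-th and $(k-1)$-st instances yields a recursion purely in the thresholds,
\[
t_k(m+1)=t_k(m)-\tfrac12 t_k(m)^2+\tfrac12 t_{k-1}(m)^2,\qquad t_0\equiv 0 .
\]
Telescoping the reduced recursion from $W_k(N)$ down to $W_0(N-k)$ (all first arguments involved are $\ge N-k+1$, hence large, so the reduced form applies throughout) gives $\r_{N,k}=W_k(N)=\sum_{j=1}^{k}\bigl(t_j(N-k+j)-\tfrac12 t_j(N-k+j)^2\bigr)$, so for fixed $k$ everything comes down to proving that $m\,t_j(m)\to\gamma_j$ for suitable constants $\gamma_j$; then, since the squared terms are $O(1/N^2)$, $\r_{N,k}=\tfrac1N\sum_{j\le k}\gamma_j\,(1+o(1))$.

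I would prove $m\,t_k(m)\to\gamma_k$ by induction on $k$, carrying along the identity $\gamma_k^2=2c_k$; here $\gamma_k$ is defined as the positive root of $\gamma^2-2\gamma-\gamma_{k-1}^2=0$, i.e.\ $\gamma_k=1+\sqrt{1+\gamma_{k-1}^2}$, whence $\gamma_k^2=2\gamma_k+\gamma_{k-1}^2=2(\gamma_k+c_{k-1})$, so with $c_k:=c_{k-1}+\gamma_k=\sum_{j\le k}\gamma_j$ we get $\gamma_k^2=2c_k$ and $c_k=c_{k-1}+1+\sqrt{1+2c_{k-1}}$, matching the statement (the base case is $\gamma_1=2=c_1$). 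The case $k=1$ is Moser's problem \cite{Mos}: the recursion becomes $t_1(m+1)=t_1(m)-\tfrac12 t_1(m)^2$, which is positive and decreasing, hence tends to $0$, and then $\tfrac1{t_1(m+1)}-\tfrac1{t_1(m)}=\tfrac{1/2}{1-t_1(m)/2}\to\tfrac12$, so $m\,t_1(m)\to 2$ by Stolz--Cesàro. For the inductive step ($k\ge 2$) put $R:=\tfrac12\gamma_{k-1}^2=c_{k-1}>0$, so the hypothesis gives $m^2\cdot\tfrac12 t_{k-1}(m)^2\to R$; setting $w_m:=m\,t_k(m)$ and substituting into the threshold recursion yields
\[
w_{m+1}=w_m+\frac1m\,\phi(w_m)+O\!\Bigl(\frac1{m^2}\Bigr),\qquad
\phi(w):=w-\tfrac12 w^2+R=-\tfrac12(w-\gamma_k)(w-\gamma_k'),
\]
with $\gamma_k'<0$ and $\gamma_k>1$.

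It remains to deduce $w_m\to\gamma_k$, and this is the heart of the matter. The displayed recursion is a discrete analogue of the flow $\dot w=\phi(w)$ run in the time variable $s=\sum_{j\le m}1/j\sim\log m$; since $\phi$ has $\gamma_k$ as its unique stable zero on $(0,\infty)$, and $\sum 1/m$ diverges while the perturbation $\sum O(1/m^2)$ converges, the trajectory should converge to $\gamma_k$. Concretely I would argue by sandwiching. First a routine induction — comparing the threshold recursion with the Moser recursion $b_{m+1}=b_m-\tfrac12 b_m^2$, using monotonicity of $x\mapsto x-\tfrac12 x^2$ on $[0,1]$ and the crude bound $W_k(m)=O(k^2/m)$ from partitioning the items into $k$ blocks and applying the $k=1$ case already established — shows $t_k(m)=\Theta(1/m)$, so $w_m$ is bounded, $0\le w_m\le M$. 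Then, for fixed $\varepsilon>0$, $\phi\le-\delta<0$ on $[\gamma_k+\varepsilon,M+1]$, so $w_m$ cannot remain $\ge\gamma_k+\varepsilon$ forever (otherwise $w_{m+1}-w_m\le-\delta/m+O(1/m^2)$ sums to $-\infty$, contradicting $w_m\ge 0$), and, because the step size is $O(1/m)\to 0$ while $\sum 1/m^2<\infty$, once $w_m$ drops below $\gamma_k+\varepsilon$ it can never climb back above $\gamma_k+2\varepsilon$; hence $\limsup w_m\le\gamma_k$. The mirror argument, using $\phi\ge\delta'>0$ on $[0,\gamma_k-\varepsilon]$ — here $R>0$ is precisely what keeps $\phi$ positive all the way down to $w=0$, which is why the $k=1$ case needed separate treatment — gives $\liminf w_m\ge\gamma_k$. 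I expect this convergence step, together with the bookkeeping needed to establish the $\Theta(1/m)$ control on the thresholds uniformly enough that the $O(1/m^2)$ perturbation is genuinely negligible while the induction on $k$ propagates, to be the only real work; the reduction to the scalar DP, the monotonicity $W_k\ge W_{k-1}$, the threshold recursion and the telescoping are all routine.
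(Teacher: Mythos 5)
Your proposal starts from the same Bellman recursion as the paper (the paper's \eqref{rhoreck} is exactly $W_k(m)=W_k(m-1)-\tfrac12 t_k(m)^2$), but then diverges in how it extracts the asymptotics. The paper keeps the coupled pair $(\r_{k,N-1},\r_{k-1,N-1})$ and does a direct two--sided induction on $N$: it guesses $\r_{k,N}\le c_k/N$ and $\r_{k,N}\ge(1-\e_N)c_k/N$ with explicit error sequences ($\e_N=A/N$ for $k=1$, $\e_N=N^{-1/2}$ for $k\ge2$) and checks that the recursion preserves these bounds, using the algebraic identity $2c_k=(c_k-c_{k-1})^2$ at the crucial step. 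You instead isolate a \emph{closed} recursion purely in the thresholds, $t_k(m+1)=t_k(m)-\tfrac12 t_k(m)^2+\tfrac12 t_{k-1}(m)^2$, telescope $W_k(N)=\sum_j\bigl(t_j-\tfrac12 t_j^2\bigr)$, reparametrize $c_k=\sum_{j\le k}\gamma_j$, and argue $m\,t_k(m)\to\gamma_k$ by a discrete dynamical--systems (sandwiching) argument around the stable zero of $\phi(w)=w-\tfrac12w^2+\tfrac12\gamma_{k-1}^2$. That is a genuinely different route. Its advantage is conceptual clarity --- the uncoupled threshold recursion, the identity $\gamma_k^2=2c_k$, and the additive decomposition $c_k=\sum\gamma_j$ make the structure of the answer transparent, whereas the paper's proof is essentially a verified guess. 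Its cost is that more bookkeeping is left open than in the paper: you need the uniform $\Theta(1/m)$ control on $t_k$ to get boundedness of $w_m$, and the sandwiching itself is only sketched.

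One small but worth--flagging imprecision: the perturbation in $w_{m+1}=w_m+\tfrac1m\phi(w_m)+r_m$ is \emph{not} $O(1/m^2)$ under the inductive hypothesis you actually carry. If all you know is $m\,t_{k-1}(m)\to\gamma_{k-1}$, then $\tfrac{m+1}{2}t_{k-1}(m)^2=\tfrac{\gamma_{k-1}^2}{2m}+\tfrac{o(1)}{m}$, so $r_m=o(1/m)$ but need not be summable. You would only get $O(1/m^2)$ by strengthening the induction to a rate, e.g.\ $m\,t_{k-1}(m)=\gamma_{k-1}+O(1/m)$, which your Stolz--Ces\`aro base case does not by itself supply. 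This is harmless in the end because your sandwiching argument really only needs $r_m=o(1/m)$ (the drift is $\Theta(1/m)$ and eventually dominates, and the step size still tends to $0$, which is all that is needed to prevent re--escape from a neighborhood of $\gamma_k$), but the line ``$\sum O(1/m^2)$ converges'' should not be invoked as stated, and you should make explicit that you are comparing $r_m$ to the drift scale $1/m$ rather than summing it.

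Everything else checks out against the paper: the reduction to a scalar DP, the monotonicity $W_k\ge W_{k-1}$, the correctness of the threshold recursion and its telescope, the algebra $\gamma_k=1+\sqrt{1+\gamma_{k-1}^2}$, $\gamma_k^2=2c_k$, $c_k=c_{k-1}+\gamma_k$ matching $c_k=c_{k-1}+1+\sqrt{1+2c_{k-1}}$, and the base case $\gamma_1=c_1=2$ via Moser's recursion.
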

Here we use the notation $a_n\approx b_n$ to indicate that $\lim_{n\to \infty}a_n/b_n=1$.

The following corollary will be needed in the proof of Theorem \ref{TSP}. We only need it for $k=1$, although we claim it for arbitrary $k$.
\begin{corollary}\label{cor1}
If we replace uniform $[0,1]$ by a distribution that has a density $f(x)=Dx+o(x)$ as $x\to 0$ then we simply replace $\frac{c_k}{N}$ by $\frac{Dc_k}{N}$.
\end{corollary}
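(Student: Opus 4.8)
The plan is to reduce the general-density case to the uniform case by a change of variables, tracking how the rescaling affects costs to first order. Let $F$ be the cumulative distribution function of the new cost distribution, so that $F(0)=0$ and, by the density hypothesis, $F(x) = \tfrac D2 x^2 + o(x^2)$ as $x \to 0$. If $y$ is uniform on $[0,1]$, then $x := F^{-1}(y)$ has the desired distribution; equivalently, given a cost $x$ drawn from the new distribution, the quantity $F(x)$ is uniform on $[0,1]$. So I would run the \emph{same} optimal strategy as in the uniform case, but applied to the transformed values $F(x_i)$ rather than to the raw values $x_i$. Since $F$ is monotone increasing, all the comparisons and threshold decisions the strategy makes on the $F(x_i)$ are identical to those it would make on the $x_i$ with correspondingly transformed thresholds, so this is a legitimate strategy in the new model and it purchases a hyperedge of $\cF$ with the same success guarantee.

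The next step is to relate the expected cost paid in the two worlds. The optimal strategy for the uniform $k$-purchase problem pays $\approx c_k/N$, and — this is the key structural fact I would extract from the proof of Theorem \ref{rhoNk} — the $k$ items it purchases all have cost $O(1/N)$ w.h.p.\ (indeed the whole point is that one waits for small values; the contribution to the expectation from the rare event that some purchased value is bounded away from $0$ is negligible). Writing $y_i = F(x_i)$ for the purchased items, we have $\sum y_i \approx c_k/N$ with the $y_i$ all $o(1)$, and inverting the asymptotic relation $F(x) \sim \tfrac D2 x^2$ gives $x = F^{-1}(y) \sim \sqrt{2y/D}$. Hence $x_i \sim \sqrt{2 y_i / D}$. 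The cost in the new world is $\sum_i x_i \sim \sqrt{2/D}\,\sum_i \sqrt{y_i}$, and since $\sum y_i \approx c_k/N$ with each term $o(1)$, by concavity/Cauchy–Schwarz the sum $\sum \sqrt{y_i}$ is maximized by spreading mass evenly, but the relevant point is simpler: I should instead run the strategy so that it optimizes $\sum \sqrt{y_i}$ directly.

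In fact the cleaner route is to work from the other direction: define $z_i := x_i^2 \cdot (D/2)$ so that to first order $z_i$ is distributed as $F(x_i)$, i.e.\ uniform on $[0,1]$, and minimizing $\sum x_i = \sum \sqrt{2 z_i/D} = \sqrt{2/D}\sum\sqrt{z_i}$ over strategies is, up to the $(1+o(1))$ corrections from the $o(x)$ error term in the density, the same as the problem of purchasing $k$ uniform variables so as to minimize the sum of their square roots. So I would re-run the entire analysis of Theorem \ref{rhoNk} with the objective $\sum \sqrt{z_i}$ in place of $\sum z_i$. Because the optimal purchased values are all of order $1/N$, on that scale $\sqrt{z}$ is essentially linear — more precisely, scaling $z = w/N$, we get $\sqrt{z} = \sqrt{w}/\sqrt N$, and the dynamic program for the optimal stopping thresholds is identical in form to the original one but with the linear cost replaced by a square-root cost; however, since we only ever pay for values $w = \Theta(1)$ in this rescaling — wait, that is wrong, the purchased values are $\Theta(1/N)$, i.e.\ $w = \Theta(1)$ after scaling by $N$, so $\sqrt{z}$ is genuinely sublinear there. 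I therefore expect the recursion for $c_k$ to be replaced by a different recursion governing the minimal expected value of $\sqrt{2/D}\sum\sqrt{z_i}$; carrying this out and checking that it reproduces exactly $D c_k/N$ is where the real work lies.

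The main obstacle, then, is precisely this last point: verifying that re-optimizing with the square-root objective (equivalently, that the first-order behavior of a general density near $0$) reproduces the \emph{same} constant $c_k$ scaled by $D$, rather than some genuinely new constant. The resolution should come from the observation that the $k$-purchase dynamic program only ever compares a currently-offered scaled cost against a threshold, and under the substitution $x \mapsto F^{-1}(x)$ with $F^{-1}(y)\sim\sqrt{2y/D}$ the optimal thresholds transform covariantly: a threshold $\theta/N$ in the uniform problem becomes $\sqrt{2\theta/(DN)}$ in the new problem, the acceptance probabilities $\Pr(x_i < \text{threshold}) = F(\sqrt{2\theta/(DN)}) \approx \theta/N$ are \emph{unchanged} to leading order, and the expected cost contributed by an accepted item, $\eee[x_i \mid x_i < \sqrt{2\theta/(DN)}]$, works out to $D$ times the uniform value $\eee[y_i \mid y_i < \theta/N]$ — because conditioning a density $\sim Dx$ to the interval $[0,\sqrt{2\theta/(DN)}]$ gives mean $\tfrac23$ of the endpoint, while conditioning uniform to $[0,\theta/N]$ gives mean $\tfrac12$ of the endpoint, and the ratio of endpoints times the ratio of these fractions yields exactly the factor $D$ after using $c_k = \Theta(1)$. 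Thus every step of the recursion derivation goes through with a uniform factor of $D$ pulled out, giving $\r_{N,k}^{(f)} \approx D c_k/N$. I would then remark that the matching lower bound follows by the same transformation applied to the POM lower bound argument behind Theorem \ref{rhoNk}. $\Box$
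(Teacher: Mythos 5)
Be careful: as stated, this corollary's hypothesis and conclusion are not consistent with the paper's own proof or with its use in Theorem \ref{TSP}. If the density really satisfied $f(x)=Dx+o(x)$, then $F(\theta)=\Pr(x<\theta)\approx D\theta^2/2$ and, as you notice midway through, the one-purchase recursion becomes $\hr_N\approx\hr_{N-1}-\frac{D}{6}\hr_{N-1}^3$, whose solution is $\hr_N\approx\sqrt{3/(DN)}=\Theta(N^{-1/2})$, not $\Theta(N^{-1})$. The recursion displayed in the paper's proof, by contrast, uses $F(\theta)\approx D\theta$ (\emph{linear} in $\theta$), which corresponds to a density that is bounded and nonzero near $0$, i.e.\ $f(x)=D+o(1)$; with that hypothesis one finds that $D\hr_{k,N}$ obeys exactly \eqref{rhoreck}, so $\hr_{k,N}\approx c_k/(DN)$ --- the $D$ should sit in the denominator. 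This is also what the Theorem \ref{TSP} application needs, where each one-purchase costs $\approx c_1/(Dn)=20/n$ with $D=1/10$, giving $10\times n\times 20/n=200$.

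Given that, your change-of-variables approach is a genuinely different route from the paper's: the paper simply re-derives \eqref{rhoreck} with the new acceptance probability, whereas you push the problem back to uniform via $y=F(x)$. Under the corrected hypothesis $f(x)\to D$ yours is in fact the slicker argument: $F^{-1}(y)\approx y/D$, so the cost transforms \emph{linearly} by the factor $1/D$, and $\hr_{k,N}\approx c_k/(DN)$ follows at once, with no need to re-optimize a new objective. Under the literal hypothesis, however, $F^{-1}(y)\sim\sqrt{2y/D}$, and your closing claim --- that $\E\!\left[x\mid x<\sqrt{2\theta/(DN)}\right]$ equals $D\cdot\E\!\left[y\mid y<\theta/N\right]$ --- is false: the left side is $\frac{2}{3}\sqrt{2\theta/(DN)}=\Theta(N^{-1/2})$ while the right side is $D\theta/(2N)=\Theta(N^{-1})$, so the two differ even in order of magnitude and no constant factor $D$ can emerge. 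Your earlier instinct, that a square-root objective forces a genuinely new recursion with a new constant (and indeed a new power of $N$), was the correct one, and the ``covariant transformation'' paragraph contradicts it rather than completing the proof.
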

With these tools available, we examine the cost of finding a shortest path between two specified vertices.
\begin{theorem}\label{thm:spaths}
The POM and ROM purchase prices of a path between vertices $1$ and $2$ in $K_n$ are both $n^{-2/3\pm o(1)}$. The AOM price is $\Omega(n^{-1/2})$.
\end{theorem}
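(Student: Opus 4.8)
The plan is to separately establish the three bounds: the ROM upper bound $n^{-2/3+o(1)}$, the POM lower bound $n^{-2/3-o(1)}$, and the AOM lower bound $\Omega(n^{-1/2})$. Since POM gives the purchaser strictly more power than ROM, a matching POM lower bound and ROM upper bound pin down both prices, and the AOM bound is a separate (weaker) lower bound for the harder model.

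For the \textbf{ROM upper bound}, I would use a two-phase strategy governed by a threshold parameter $\theta = \theta(n)$ to be optimized. In the first phase, simply accept every queried edge whose cost is below $\theta$ and which is incident to vertex $1$ or vertex $2$ (or, more cleverly, build up a small BFS-like neighborhood of $1$ and of $2$); in the second phase, once the two growing stubs are within reach, accept a cheap connecting edge. Each accepted edge incident to $1$ costs $O(\theta)$ in expectation, and we need about $1/(n\theta)$ attempts before we see an acceptable one since a given edge is queried uniformly and is cheap with probability $\theta$; building out to depth giving $\approx \theta n$ available cheap neighbors on each side lets the two halves meet. Balancing the cost $O(\theta)$ per edge against the number of edges $\approx \log_{\theta n}(n) \approx \frac{\log n}{\log(\theta n)}$ needed to connect, and optimizing, should yield a total of $n^{-2/3+o(1)}$ with $\theta \approx n^{-2/3}$ roughly; I expect the optimization to land on a path of $O(1)$ or $O(\log n / \log\log n)$ edges each of cost $n^{-2/3+o(1)}$. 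The routine part here is the Chernoff/coupon-collector estimates controlling how many queries are needed to grow each stub.

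For the \textbf{POM lower bound}, the key point is that even with optimal ordering, the purchaser cannot avoid paying $n^{-2/3-o(1)}$. Here I would argue that any strategy must, at the moment it queries an edge it will ultimately use, commit to it based only on partial information, and that the $1$--$2$ path it constructs has some length $\ell \geq 1$; conditioned on using a path of length $\ell$, the cheapest the purchaser can hope to make it is governed by a $k$-purchase-type bottleneck. Concretely, to finish a path one eventually must purchase an edge completing the connection, and at that point only $O(n)$ candidate edges are available and untested relative to the committed stubs, so by a Theorem~\ref{rhoNk}-style argument (or Corollary~\ref{cor1}) that last edge costs $\Omega(1/n)$ — but if $\ell$ is large the $\ell$ edges together cost $\Omega(\ell/n)$, while if $\ell$ is small the purchaser is forced to query many edges to find even a few cheap ones incident to $1$, and must accept some that are not extremely cheap. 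Trading these two effects against each other — short paths force expensive individual edges because the relevant "local" pool is small; long paths force many edges — gives the lower bound $n^{-2/3-o(1)}$. \textbf{This tradeoff argument is the main obstacle}: one must set up a clean potential or accounting that simultaneously handles all path lengths $\ell$ and all adaptive strategies, presumably by showing that whatever $\ell$ the strategy effectively targets, the expected cost is $\geq n^{-2/3-o(1)}$, and this is where the exponent $2/3$ is actually forced.

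For the \textbf{AOM lower bound} $\Omega(n^{-1/2})$, the adversary has more power, so a weaker strategy suffices to prove a bound, but the purchaser also has less power, so in fact we should get a \emph{larger} lower bound — wait, the statement only claims $\Omega(n^{-1/2})$, which is larger than $n^{-2/3}$, consistent with AOM being harder. I would have the adversary present edges in an order designed so that the purchaser never sees a cheap edge incident to the current "frontier" when it needs one: e.g., the adversary first offers all edges not incident to $1$ or $2$, forcing the purchaser either to accept expensive structure early or to have committed nothing, and then reveals the edges at $1$ and $2$ in an order that delays cheap ones. A cleaner approach: the adversary can force the purchaser to decide on roughly $\sqrt n$ edges incident to vertex $1$ before revealing any edge that could shorten the path, so that with the relevant pool effectively of size $\sqrt n$ the cost per committed edge is $\Omega(1/\sqrt n)$, by the Moser/$k$-purchase bound applied to a pool of that size. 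Making precise which subfamily of edges the adversary can "hide" while still forcing commitment is the technical crux here, but it is a standard adversary-argument pattern and I expect it to go through with an explicit adaptive adversary plus a Theorem~\ref{rhoNk} (with $N \approx \sqrt n$) lower bound on the forced cost.
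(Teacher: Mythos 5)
Your plan correctly identifies the three bounds to prove and gets the right general shape for each, but you explicitly flag the POM lower bound as "the main obstacle," and indeed that is the crux you are missing. The paper's resolution is to stop thinking in terms of the path length $\ell$ and instead track the two components $C_1(t)$ and $C_2(t)$ containing the endpoints. Any successful run ends by buying a bridge $e^*$ between these components; conditioning on the component sizes $a=|C_1|$ and $b=|C_2|$ at that moment, the bridge cost is bounded below by the minimum of $ab$ fresh uniform variables (the edges between the two growing sets, which are unconditioned as they appear), giving cost $\gtrsim 1/(ab\log^2 n)$ w.h.p.\ via a first-moment/order-statistics bound. Separately, an \emph{offline} union bound over subtrees of $K_n$ shows that w.h.p.\ every subtree on $m \gg \log n$ vertices has total weight $\geq m/(3n)$, so the purchased components cost $\gtrsim (a+b)/n$. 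Minimizing $1/(ab) + (a+b)/n$ forces $ab \approx n^{2/3}$ and yields the $n^{-2/3-o(1)}$ bound. The key move you need is to replace your online, path-length-indexed tradeoff with these two offline structural bounds on component cost and bridge cost — the adaptivity of the purchaser is then handled for free, since a lower bound on the cost of any subtree/bridge pair applies a fortiori to the one the algorithm builds.

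Your ROM upper bound and AOM lower bound sketches are closer to the paper in spirit, but also need sharpening. For ROM, the paper grows BFS trees from both endpoints over roughly $\log n/\log\log n$ rounds of the edge stream, with cost threshold $p = n^{-1+o(1)}$ per round, reaching stubs of size $n^{2/3+o(1)}$, then connects them via a $Z$-phase edge of cost $\approx n^{-4/3}$; the two contributions balance to $n^{-2/3+o(1)}$. Your "balance $\theta$ against depth" heuristic is the right instinct, but you should commit to the multi-round structure and the per-round parameter scaling to actually get the exponent. For AOM, the paper's adversary simply withholds all edges incident to the two endpoints until the very end; the $\Omega(n^{-1/2})$ then comes from a clean dichotomy — either some purchased component in $[2,n-1]$ has $\geq n^{1/2}$ vertices (and hence costs $\geq n^{-1/2}/10$ by the same subtree union bound), or all components are small and attaching vertex $1$ to one of them costs $\gtrsim 2/n^{1/2}$ by Theorem \ref{rhoNk}. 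This dichotomy on component sizes, not a "$\sqrt n$-size pool of edges at vertex $1$," is where the exponent $1/2$ is actually forced.
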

This stands in stark contrast to the length $\approx \tfrac{\log n}{n}$ we could achieve if we were allowed to examine all the edges ahead of time.

A similarly stark increase in cost occurs when we aim for the seemingly simple target of a triangle.  Although for arbitrary $\om\to\infty$, a uniform $[0,1]$ weighting of $K_n$ has a triangle of cost $O\bfrac{\om}{n},$ w.h.p., we prove:
\begin{theorem}\label{triangles} The POM and ROM purchase prices of a triangle in $K_n$ are both $n^{-4/7\pm o(1)}$. The AOM price is $\Omega(n^{-1/2})$.
\end{theorem}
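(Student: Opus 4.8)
The plan is to prove matching (up to $n^{o(1)}$ factors) upper and lower bounds of $n^{-4/7}$ in POM, which then automatically gives the same upper bound in ROM (since a POM strategy can be simulated in ROM by simply declining every edge we wouldn't have queried — wait, that's not quite right since in ROM we don't control the order). More carefully: for the ROM upper bound I would run the natural threshold strategy directly on the randomly-presented stream, and for the POM lower bound I would argue that no adaptive strategy, even one choosing its own order, can do better. Since ROM is harder than POM for the purchaser, ROM $\geq$ POM, so the ROM lower bound follows from the POM lower bound; and the POM upper bound follows from the ROM upper bound since POM is at least as powerful. So it suffices to prove: (i) a POM lower bound $\Omega(n^{-4/7-o(1)})$, (ii) a ROM upper bound $O(n^{-4/7+o(1)})$, and (iii) the AOM lower bound $\Omega(n^{-1/2})$.

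For the upper bound (ii), I would fix a threshold $p = n^{-\beta}$ for $\beta$ to be optimized, and on the random stream accept an edge $e=\{u,v\}$ iff $x_e \leq p$ \emph{and} accepting $e$ keeps alive the possibility of completing a triangle (e.g. $e$ is incident to few already-rejected edges at $u$ or $v$, or more simply: we only commit to vertices in a small active set). The cleanest implementation: pick a small set $W$ of $w = n^{\gamma}$ candidate ``center'' vertices, wait for the first vertex $v\notin W$ that has two cheap (cost $\le p$) edges to $W$ revealed before too much of the stream passes, and grab that cherry plus the connecting edge inside $W$ if it is also cheap. The expected cost is roughly $3p$ conditioned on success, so one wants $p$ as small as possible subject to success w.h.p. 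The probability that a fixed external vertex forms a cheap cherry to $W$ is $\approx \binom{w}{2}p^2$, and we have $\approx n$ external vertices and the inside-$W$ edge is cheap with probability $p$; balancing the constraints that (a) some cheap triangle of this shape exists w.h.p. and (b) we actually see all three relevant edges early enough in the random order so that we haven't been forced to reject them, leads (after the routine optimization over $\beta,\gamma$) to $p = n^{-4/7+o(1)}$. The key combinatorial input is a first-moment/second-moment argument for the existence of the right sub-structure among edges whose stream-positions are favorable; I would also invoke Theorem \ref{rhoNk} (the $k$-purchase bound, with $k=3$ or applied to the three separate edge-classes) to control the residual cost of the last few edges we are ``forced'' to buy.

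For the lower bound (i), I would use a potential / exposure-martingale argument: at any point in a POM execution, the algorithm has queried some set $Q$ of edges, rejected a set $R\subseteq Q$, and tentatively purchased $P = Q\setminus R$; to eventually own a triangle it must purchase three edges forming a triangle, and edges in $R$ are dead. The crucial observation is that if the algorithm only ever tentatively purchases edges of cost $\le p$, then each query is cheap with probability only $p$, so to have a decent chance of collecting a triangle it must make many queries — but each additional query it makes on an edge it then rejects ``wastes'' a vertex-pair, and a careful accounting shows it cannot explore enough pairs fast enough without either (a) paying $\omega(n^{-4/7})$ in expectation on the edges it does buy, or (b) failing to complete a triangle with constant probability. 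I would make this precise by tracking, for the $\le 3$ vertices the algorithm is ``focusing on,'' how many cheap-edge trials it gets, versus for a broader focus how the cost of the three edges it ends up with must grow; the tradeoff between ``narrow focus, few cheap edges available'' and ``broad focus, forced to accept expensive edges'' is exactly what produces the exponent $4/7$. \textbf{This lower bound is the main obstacle}: one must rule out all adaptive strategies, including ones that adaptively switch which triangle they are aiming for, and the bookkeeping for the three-edge target is genuinely more delicate than for a path (which gives $n^{-2/3}$), since a triangle must close up. Finally, for (iii), the AOM bound $\Omega(n^{-1/2})$ is softer: the adversary presents edges in an order that forces the purchaser, whenever it accepts an edge, to have essentially no information correlating that edge with the two others it will need; since any single accepted edge has expected cost $\Omega(1/\sqrt{\#\text{queries so far with a cheap accept}})$ and the adversary can cap the useful number of such accepts at $O(\sqrt n)$ before a triangle is forced, one gets $\Omega(n^{-1/2})$ — I would present this via the same adversary argument used for Theorem \ref{thm:spaths}, since closing a triangle is only harder than finding a short $1$–$2$ path.
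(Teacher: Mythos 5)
Your high-level architecture (ROM upper bound plus POM lower bound plus an AOM adversary) is correct and matches the paper's. However, both your upper and lower bounds take genuinely different routes than the paper, and the lower bound has a real gap.

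For the upper bound, the paper does not fix an a priori set $W$ and wait for a single cheap cherry into $W$. Instead it first proves a stand-alone result (Theorem \ref{PLT}) on the cost of purchasing $\ell$ \emph{paths of length two}: split the stream into a red/blue/green thirds, buy $k$ cheap disjoint red edges (threshold $\approx k/n^2$), extend them with $\ell$ cheap blue edges (threshold $\approx \ell/(kn)$) to form $\ell$ cherries at cost $\approx (\ell/n)^{4/3}$ after optimizing $k\approx(\ell^2 n)^{1/3}$, and then use the green edges plus the $k=1$ case of Theorem \ref{rhoNk} to close \emph{one} of the $\ell$ cherries for an additional $\approx 1/\ell$; optimizing $\ell\approx n^{4/7}$ gives the bound. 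Your single-cherry strategy only pays $3p$ \emph{conditioned on success}, but the closing edge $w_1w_2$ is cheap with probability only $p$, and in ROM it may be revealed (and irrevocably rejected) before you have the cherry; you need $\Theta(\ell)$ candidate cherries to make closure cheap and to hedge against stream order, which is exactly what the paper arranges. Your exponent bookkeeping ($n\binom{w}{2}p^3\gtrsim1$) controls existence but not the cost you pay accumulating candidates, and as stated it does not pin down $4/7$.

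The lower bound is the genuine gap. You correctly identify it as the main obstacle, but ``a careful accounting shows'' is a placeholder, not an argument, and there is a specific structural idea you are missing. The paper reduces triangle purchase to purchasing many paths of length two (via \eqref{infseq}: if the algorithm has built only $\ell$ cherries, the closing edge costs $\gtrsim 1/(\ell\log^2 n)$), and then proves a lower bound for paths of length two by a further reduction: any strategy producing $\ell$ cherries can be converted, up to constants and a log factor via random edge orientations and McDiarmid's inequality, into a strategy that builds \emph{edge-disjoint stars}. Large stars are ruled out as too expensive (Lemma \ref{largestar}), and the cost of building small stars is bounded below by an explicit continuous optimization \eqref{opt} whose minimum is $\Omega((\ell/(L^2n))^{4/3})$. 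Without something playing the role of the cherry-to-disjoint-star reduction and the resulting optimization, you cannot rule out clever adaptive strategies that re-use vertices, which is precisely the difficulty you flag.

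On AOM, your reduction ``closing a triangle is only harder than finding a short $1$--$2$ path'' is not valid: a $1$--$2$ path must pass through two prescribed vertices, a triangle may sit anywhere, and these targets are incomparable. Indeed the paper's adversaries are structurally opposite: for paths the adversary \emph{withholds} edges at $1$ and $n$ until the end, while for triangles it \emph{leads} with all edges at vertex $1$, then all edges inside the accepted neighborhood $A$, forcing a cost $\Omega(c_{|A|}/n + 1/|A|^2)=\Omega(n^{-1/2})$, and recurses on vertex $2$ if the purchaser declines. You should give the triangle-specific adversary rather than cite the path one.
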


We can generalize the ROM upper bound in Theorem \ref{triangles} to the problem of purchasing a copy of a clique $K_r$. We will prove that if $\kappa^{\mathrm{ROM}}_{r,n}$ is the expected cost of purchasing a copy of $K_r,r\geq 3$ in ROM, then
\begin{theorem}\label{Krthm}
$$\kappa_{r,n}^{\mathrm{ROM}}\leq O\bfrac{1}{n^{d_r+o(1)}}$$
$d_3=\frac47,d_4=\frac29$ and $d_r=\frac{1}{11\cdot 2^{r-5}-1},i\geq 5$.
\end{theorem}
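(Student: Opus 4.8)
The plan is induction on $r$, with Theorem~\ref{triangles} as the base case $r=3$; write $\beta_r=n^{-d_r}$ for the target expected cost. The $r=3$ strategy is the template. Processing the edges in their (uniformly random) ROM order, one maintains a partial structure: accept an edge of cost $\le q_0$ whose endpoints are currently isolated as a ``seed''; accept an edge of cost $\le q_1$ that attaches a fresh vertex to a seed endpoint, forming a ``cherry''; and finally accept one edge of cost $\le q_2$ closing a cherry into a triangle, then stop. The thresholds are tuned so that the three cost contributions — of orders $n^2q_0^2$ (seeds), $n^3q_0q_1^2$ (cherries), $q_2$ (the closing edge) — are balanced, subject to the constraint that the expected number of completable triangles, $\asymp n^3q_0q_1q_2$, is $\omega(1)$; a second-moment argument then produces an actual triangle. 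This gives $q_0=n^{-9/7+o(1)}$, $q_1=n^{-8/7+o(1)}$, $q_2=\beta_3=n^{-4/7+o(1)}$.

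For the inductive step the key structural fact is that $K_{r+1}$ is the union of two copies of $K_r$ sharing a $K_{r-1}$, plus one further edge joining their two ``private'' vertices. So, to purchase a $K_{r+1}$: use the inductive procedure to build many copies of $K_r$ (cheaply, in ROM); then ``grow'' copies, attaching a fresh vertex to a $K_{r-1}$-subface by accepting the $r-1$ needed edges under an increasing sequence of thresholds $q^{(1)}\le\dots\le q^{(r-1)}$ — the threshold escalating as the vertex accumulates edges to the subface, so that the many partial (non-completing) extensions one pays for stay cheap — which yields a second copy of $K_r$ sharing a $K_{r-1}$ with the original; and finally accept one edge of cost $\le q^{(r)}$ joining the two private vertices, completing a $K_{r+1}$, and stop. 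For $r=3$ this is exactly the ``doubled'' triangle strategy: two triangles on a common seed edge plus the closing edge.

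Then one balances. Each contribution — building the $K_r$'s, the $r-1$ extension levels, and the closing edge — is a monomial in $n$ and the thresholds; equating them, subject to the constraint that the expected number of completable $K_{r+1}$'s is $\omega(1)$, and solving the resulting linear system for the exponents, yields the recursion $\tfrac{1}{d_{r+1}}=\tfrac{2}{d_r}+1$. The ``$2$'' reflects that a completed structure needs two copies of $K_r$ — a birthday-type saving, since $m$ ``pages'' on a common $K_{r-1}$ yield $\binom m2$ candidate closing edges — while the ``$+1$'' is the single closing edge. With $d_3=\tfrac47$ this gives $d_4=\tfrac29$ and, for $r\ge5$, the closed form $d_r=\tfrac{1}{11\cdot2^{r-5}-1}$; one may equally verify the small cases $r=4,5$ directly, the closed form being consistent with the recursion for all $r\ge3$.

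The bulk of the work, and the main obstacle, is making the online branching-process analysis rigorous in ROM. One must show that with probability $n^{-o(1)}$ the $\binom{r+1}2$ edges of a prospective $K_{r+1}$ are caught in a usable order: the edges needed for the small substructures are accepted under the low thresholds before they are wanted, the two copies of $K_r$ are assembled before the would-be closing edge arrives, and the relevant vertices are not absorbed into competing partial structures. This requires (i) showing the expected count of structures at each level, and the expected cost at each level, equal the claimed monomials up to $n^{o(1)}$ — for which the counts must stay sub-linear, so that ``a fresh vertex is available'' and ``a vertex still has the right degree'' hold with probability $1-o(1)$, controlled by exposing the random order carefully and using concentration — and (ii) a second-moment computation over the family of potential $K_{r+1}$'s, as in the $r=3$ case, to upgrade the $\omega(1)$ expectation to an (almost) sure success. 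Once these estimates are in place, the balancing and the recursion are routine.
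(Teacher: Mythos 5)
Your base case $r=3$ is essentially the paper's strategy (seeds, cherries, close), and your quoted thresholds $q_0=n^{-9/7}$, $q_1=n^{-8/7}$, $q_2=n^{-4/7}$ agree with the paper's choices of $10k/n^2$, $2\ell/(kn)$ and $1/\ell$ with $k\approx n^{5/7}$, $\ell\approx n^{4/7}$. The inductive step, however, is a genuinely different decomposition and I do not think it achieves the claimed exponent.

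The paper writes $K_{r+1}$ as an apex vertex joined to a $K_r$: use the red half of the edges and the $k$-purchase algorithm (Theorem \ref{rhoNk}) to build a star $K_{1,\ell}$ at cost $\Theta(\ell^2/n)$, and then apply the inductive bound \emph{inside} the star's leaf set, where the blue edges form a $G_{\ell,1/2}$, at cost $\ell^{-d_r+o(1)}$ (this is exactly why Remark \ref{remy} is stated for $G_{n,p}$). Balancing $\ell^2/n$ against $\ell^{-d_r}$ gives $\ell=n^{1/(d_r+2)}$ and $d_{r+1}=d_r/(d_r+2)$, i.e.\ $1/d_{r+1}=2/d_r+1$. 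The key saving here is that the recursion is applied to a random graph on a \emph{much smaller} vertex set $S_1$, and nothing in the star stage ties the prospective $K_r$ to any particular vertices of $S_1$.

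Your decomposition — two $K_r$'s sharing a $K_{r-1}$, closed by one edge between the private vertices — is correct combinatorially, but the cost analysis it induces is worse, because attaching a ``page'' requires $r-1$ edges to a \emph{fixed} set of $r-1$ vertices, rather than a single edge to a fixed apex followed by a free recursion inside a large neighborhood. Concretely for $K_4$: build an edge $uv$ (cost $O(1/n^2)$), then $m$ pages on $uv$ with escalating thresholds $q^{(1)},q^{(2)}$, then close among $\binom m2$ candidates. Optimizing
\[
n\bigl(q^{(1)}\bigr)^2 \;=\; nq^{(1)}\bigl(q^{(2)}\bigr)^2 \;=\; \frac{1}{\bigl(nq^{(1)}q^{(2)}\bigr)^2}
\]
gives $q^{(2)}=n^{-3/10}$, $q^{(1)}=n^{-3/5}$, and cost $n^{-1/5}$ — strictly worse than the paper's $n^{-2/9}$. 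Distributing the pages over many seed edges does not help, as the birthday saving then splinters across seeds. More generally, you assert that ``equating them \dots yields the recursion $1/d_{r+1}=2/d_r+1$'' but never carry out the balance, and I don't see how your construction produces it; the $2/d_r$ in the paper's recursion comes from the exponent $2$ in the star cost $\ell^2/n$, not from needing two copies of $K_r$. The missing idea in your write-up is precisely the paper's observation that the inductive hypothesis can be invoked \emph{inside} the star's leaf set as a $G_{\ell,1/2}$ instance (so the recursion is in the vertex count $\ell$, not in $n$), which is what makes the exponent recursion so clean.

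Finally, the second-moment machinery you invoke is not needed in the paper: success at each stage is made to hold with probability $1-o(n^{-3})$ by Chernoff-type bounds, with a trivial fallback for the failure event, so no second-moment upgrade is required.
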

We note that the distribution of the minimum cost of a clique when are allowed to examine all the edges is of the order $n^{-2/(k-1)}$, see \cite{FPS}.

Despite these examples, purchasing a structure in our model is not always so prohibitively costly, compared with the minimum weight structure available.  In particular, recall that the minimum cost spanning tree for a uniform $[0,1]$ weighting of $K_n$ has asymptotic length $\zeta(3)$ w.h.p. \cite{FriezeTree}
\begin{theorem}\label{thMST}
Let $\b^{\text{POM}}_n,\b^{\text{ROM}}_n$ be the purchase price of a spanning tree in the POM and ROM models, respectively.   Then, for $n$ large, we have
\[
\z(3)< 1.38\leq \b^{\text{POM}}_n\leq \b^{\text{ROM}}_n<2\z(3).\\
\]
Furthermore, the AOM price is $O(1)$.
\end{theorem}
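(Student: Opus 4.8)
The plan is to establish the four inequalities separately, in order of difficulty. The easiest is $\b^{\text{POM}}_n \le \b^{\text{ROM}}_n$: any POM strategy can simulate a ROM instance by simply examining edges in whatever random order they happen to be presented, so the POM optimum is at most the ROM optimum. The lower bound $\z(3) < 1.38 \le \b^{\text{POM}}_n$ has two parts. That $\b^{\text{POM}}_n > \z(3)$ asymptotically is essentially immediate, since even with full foreknowledge the minimum spanning tree costs $\approx \z(3)$ w.h.p.\ \cite{FriezeTree}, and online purchasing can only do worse; to get the explicit constant $1.38$ for large $n$ I would argue that the purchaser, lacking foreknowledge, is forced to ``commit'' to buying edges before knowing whether cheaper completions will appear, and quantify this. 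Concretely, consider the last few edges needed to complete the tree: by the time the spanning structure is nearly complete, the set of edges that would finish it is essentially determined, and the purchaser must accept one of a bounded number of candidate edges whose costs are roughly i.i.d.\ uniform, paying in expectation something like what the $1$-purchase problem (Theorem~\ref{rhoNk}, $c_1 = 2$) forces. A careful bookkeeping of how many ``forced accept'' stages occur, combined with the $\z(3)$ floor on the bulk of the tree, should push the constant above $1.38$; this is the part requiring the most care.

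For the ROM upper bound $\b^{\text{ROM}}_n < 2\z(3)$, the natural strategy is a thresholding/greedy scheme tied to the Kruskal process. Set a threshold schedule: as edges arrive in random order, accept an edge $e = \{u,v\}$ iff it joins two distinct components of the currently-purchased forest \emph{and} its cost $x_e$ lies below a threshold $\tau_k$ depending on the current number $k$ of components (equivalently on how far along we are). The point is that when $k$ components remain, there are roughly $\binom{k}{2} \cdot (n/k)^2 \approx n^2$... more precisely, the number of edges whose endpoints lie in distinct components is $\binom{n}{2} - \sum \binom{|C_i|}{2}$, which is $\Theta(n^2)$ as long as the components are reasonably balanced, so each arriving edge is ``useful'' with probability bounded below, and we will encounter $\gg \log n$ useful candidates before exhausting the input. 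Choosing $\tau_k$ so that the expected cost of the edge eventually accepted at stage $k$ is at most $(1+o(1))\cdot 2 \cdot \mathbb{E}[\text{min useful cost at stage } k]$ — i.e.\ paying at most twice the ``offline greedy'' cost at each stage — and summing over $k = n-1$ down to $1$, the total telescopes to at most $(2+o(1))\z(3)$ because $\sum_k \mathbb{E}[\text{min useful edge joining two of } k \text{ components}] \to \z(3)$ is exactly the known convergence underlying the MST result. The factor $2$ comes from the $c_1 = 2$ cost of the $1$-purchase problem applied at each stage: with $m = \Theta(n^2)$ useful candidates and target acceptance cost $\sim c/m$, the online penalty over the offline minimum $\sim 1/m$ is the factor $c_1 = 2$.

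Finally, for the AOM bound $O(1)$, the adversary controls the order, so we cannot rely on usefulness of a random arriving edge; instead I would have the purchaser maintain a budget and a more robust rule. The key observation is that the adversary must eventually present, for each necessary ``cut,'' some edge crossing it, and the purchaser can afford to wait: group the edges incident to a fixed vertex, say vertex $1$ — there are $n-1$ of them, and the adversary must present all of them (or the purchaser cannot complete a spanning tree), so the purchaser can run a near-optimal $1$-purchase strategy \emph{within} this pool, paying $O(1/n)$ for an edge at vertex $1$. Iterating a contraction argument — after securing a cheap edge at vertex $1$, contract and repeat on the remaining $n-1$ vertices — but being careful that the adversary may interleave pools, one shows the purchaser can secure, for each of $n-1$ contraction steps, an edge of cost $O(1/(\text{current size}))$ drawn from a pool of size $\Omega(\text{current size})$ that the adversary is forced to exhaust. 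Summing $O(1/k)$ over $k = n, n-1, \dots, 2$ gives $O(\log n)$, which is not quite $O(1)$; to get $O(1)$ one must instead observe that at step $k$ the forced pool has size $\Omega(k^2)$ (all edges between the two sides of \emph{some} cut, or all edges within the shrinking vertex set), giving per-step cost $O(1/k^2)$ and total $O(1)$. The main obstacle in the AOM argument is handling adaptivity and interleaving: the adversary can reveal edges from many different ``forced pools'' simultaneously and adaptively, so one must define the pools and the purchaser's accounting so that the forced-exhaustion property is preserved no matter how the adversary interleaves, which I expect to be the delicate technical heart of this part.
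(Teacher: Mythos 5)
Your easy inequality $\b^{\text{POM}}_n \le \b^{\text{ROM}}_n$ is fine: the POM purchaser can commit to a random ordering in advance. The remaining three parts each diverge from the paper, and two of them have genuine gaps.

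\textbf{Lower bound.} Your sketch (``consider the last few edges needed to complete the tree \dots careful bookkeeping should push the constant above $1.38$'') contains no actual argument, and the approach you gesture at will not produce a tight constant. The paper's proof goes through a different mechanism. It first formulates an \emph{average-two-purchase problem}: among $x_1,\dots,x_N$, purchase a random number $\nu_N \ge 1$ of items with $\E(\nu_N)=2$, and shows this costs $\gtrsim 2.5/N$ (Theorem~\ref{average}). It then runs the spanning-tree algorithm on $[n+1]\setminus\{x\}$ for a random $x$, adjoins two cheap edges from $x$ to obtain a unicyclic graph $T_2$, and observes that each fixed vertex $v$ has degree $\ge 1$ with average degree $2$ in $T_2$, so the edges at $v$ constitute an average-two-purchase instance. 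Summing over the $n{+}1$ vertices (each edge counted twice) gives a tree cost $\ge 1.25$. The constant $1.38$ is then obtained by the sharper optimization of Aldous, Angel and Berestycki, which replaces $2.5$ by $\approx 2.737$ via an explicit variational problem over threshold sequences $(q_k)$ and $(a_k)$. None of this structure appears in your proposal; as written, your lower-bound argument does not establish $1.38$ (nor any explicit constant above $\z(3)$).

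\textbf{ROM upper bound.} Your Kruskal-with-thresholds scheme is a genuinely different route from the paper's. The paper's algorithm \texttt{BUYTREE} is two-phase: on the first $\a N$ edges it greedily buys every acyclic edge of cost $\le \b/n$ (with $\a\b>1$), producing w.h.p.\ a forest whose giant tree component has a well-understood size via the $G_{n,m}$ phase transition; on the remaining $(1-\a)N$ edges it solves one independent $1$-purchase problem per small component to attach it to the giant, using the explicit component-size distribution~\eqref{nuk} to evaluate the total. With $\a=0.69,\ \b=3.5$ this gives $< 2.31 < 2\z(3)$. Your per-stage threshold argument is plausible in spirit (paying an online factor $c_1=2$ over the offline greedy at each merge), but the accounting you sketch is unrigorous in exactly the places the paper's decomposition is designed to avoid: the $n-1$ Kruskal stages do not occur sequentially in the ROM stream, the number of ``useful'' edges remaining when few components are left can be far from $\Theta(n^2)$ and is correlated with the history, and the identity $\sum_k \E[\text{min useful edge at stage }k]\to\z(3)$ is not literally the form the $\z(3)$ result is proved in. If you want to pursue your route you would need to control the late stages and the revelation schedule explicitly; the paper sidesteps this entirely by making only one ``big'' pass followed by independent small $1$-purchase subproblems.

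\textbf{AOM.} Your argument here does not work as stated. You correctly notice that $\sum_k O(1/k)$ gives $O(\log n)$, but the proposed fix --- a pool of size $\Omega(k^2)$ --- is not available for the task you describe: the edges incident to a distinguished vertex in a contracted graph on $k$ vertices number $\Theta(k)$, not $\Theta(k^2)$, and if you instead mean ``all edges inside the current vertex set'' you lose control of which component the chosen edge attaches to. The adversary-interleaving issue you flag as ``the delicate heart'' is indeed fatal to this line of attack. The paper's AOM bound is a one-line reduction: Theorem~\ref{TSP} gives an AOM Hamilton cycle of expected cost $O(1)$, and a Hamilton path is a spanning tree, so the AOM spanning tree price is $O(1)$. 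You should simply cite the Hamiltonicity machinery rather than build a bespoke contraction argument.
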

We will give a short proof that $\b^{\text{POM}}_n\geq 1.25$ in Section \ref{STLB}. The bound $\b^{\text{POM}}_n\geq 1.38$ is due to prior work of Aldous, Angel and Berestycki \cite{AAB} which is incomplete.   (Although their paper considers the ROM problem, the proof of their lower bound is valid for $\b^{\text{POM}}_n$ also.)  We have included a sketch of their proof in Section \ref{STLB1}.

For our next result, let us consider the setting where $\cF$ is the hypergraph of perfect matchings on the edge-set of the complete bipartite graph $K_{n,n}$.  
\begin{theorem}\label{BPM}  Letting $\m^{\text{POM}}_n,\m^{\text{ROM}}_n,\m^{\text{AOM}}_n$ be the POM, ROM and AOM purchase prices of perfect matchings on $K_{n,n}$, we have
\[
2\lesssim \m^{\text{POM}}_n\leq \m^{\text{ROM}}_n\lesssim \m^{\text{AOM}}_n\lesssim 4c_3,\]
\end{theorem}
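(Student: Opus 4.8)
The displayed chain has four pieces; the two monotonicities and the lower bound are short, and the upper bound carries all the content (once $\m^{\text{AOM}}_n\lesssim 4c_3$ is known, so is $\m^{\text{POM}}_n\le\m^{\text{ROM}}_n\le 4c_3$).

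For $\m^{\text{POM}}_n\gtrsim 2$: partition $E(K_{n,n})$ into the $n$ stars $E_1,\dots,E_n$ around the left vertices, so $|E_i|=n$ and the costs in distinct $E_i$ are independent. Every perfect matching meets each $E_i$ exactly once, so any strategy that purchases a perfect matching in particular purchases at least one edge of every $E_i$; restricted to $E_i$ it is a legal strategy for the $1$-purchase problem on $n$ items (the other stars carry no useful side information, being independent), so its expected spend on $E_i$ is at least $\r_{n,1}$. Summing, $\m^{\text{POM}}_n\ge n\r_{n,1}\approx n\cdot c_1/n=2$ by Theorem~\ref{rhoNk}. For $\m^{\text{POM}}_n\le\m^{\text{ROM}}_n$: a POM purchaser may inspect a uniformly random not-yet-seen edge at each step, which reproduces the ROM input distribution, so every ROM strategy is available in POM at equal expected cost. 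For $\m^{\text{ROM}}_n\le\m^{\text{AOM}}_n$: presenting the edges in uniformly random order is a legal oblivious adversary, so the worst-case cost of any fixed strategy is at least its expected cost under that adversary, which is its ROM cost; taking the infimum over strategies gives the inequality.

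For the upper bound I would build an explicit AOM strategy out of $3$-purchase subroutines. First note that for the \emph{pure} $k$-purchase problem the residual state depends only on the pair (number bought, number remaining), so adversarial ordering confers no advantage and the AOM $k$-purchase price on $N$ items is still $\approx c_k/N$. The strategy runs, concurrently as the adversary reveals edges, two independent AOM $3$-purchase subroutines at \emph{each} left vertex and two at \emph{each} right vertex: when edge $(\ell_i,r_j)$ is revealed, each subroutine currently attached to $\ell_i$ or $r_j$ may buy it according to its own threshold (a function only of how many edges that subroutine has itself bought so far), and the edge is purchased if any of these subroutines wants it. Each subroutine responds only to its $n$ incident edges, whose costs are i.i.d.\ uniform, presented in some adversarial suborder, so it costs $\approx c_3/n$; summing over the $4n$ subroutines bounds the expected total spend by $4c_3(1+o(1))$. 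Let $H$ be the random set of purchased edges; by construction every left vertex and every right vertex has degree at least $3$ in $H$, and the purchases at a vertex come from two independent subroutines.

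The heart of the matter is the claim that \emph{w.h.p.\ $H$ contains a perfect matching}, since then the strategy has in particular purchased one and we are done. Were the weights used in ROM, this would follow from a routine first-moment bound over minimal violators of Hall's condition, because the purchased star at each vertex would land on a near-uniformly random partner set; both sides are needed precisely because a one-sided bounded-degree random bipartite graph leaves a constant fraction of the opposite side isolated. Against an adaptive adversary one must instead control how far it can distort those partner sets: the adversary governs only \emph{when} each edge is revealed, hence only the threshold at which each cost is tested, never the cost itself. The quantitative statement one proves is that for any vertex $v$ and any prescribed partner set $T$, the probability that both subroutines at $v$ confine all their purchases to $T$ is at most a fixed function of $|T|/n$ raised to a power increasing in the number of subroutines; feeding this into the usual split of the Hall union bound into small, moderate, and linear-sized candidate violators shows that $H$ satisfies Hall's condition w.h.p. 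The genuine obstacle is exactly this combinatorial lemma in the adversarial setting — establishing the sharp tail bound on how much the adversary can bias a purchased star toward a small target set, and checking that with two subroutines per vertex on each side it beats the entropy $e^{2H(\alpha)n}$ of the family of candidate Hall violators of all sizes $\alpha n\le n/2$ — with the linear-sized range being the delicate one and the place where the value of the constant multiplying $c_3$ (namely $4$) is pinned down.
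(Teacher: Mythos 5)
Your lower bound and the two monotonicity inequalities are correct and essentially match the paper's reasoning: the paper also reduces to the $1$-purchase problem via a single left star (by symmetry), while your formulation of summing the $1$-purchase cost over all $n$ left stars is a cleaner way to say the same thing, and the observations that a POM player can simulate ROM and that the random order is one legal adversary are exactly what is needed.

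The upper bound is where your proposal diverges and has two real problems. First, the "two independent $3$-purchase subroutines at each vertex" construction does not work as stated: the optimal $k$-purchase strategy is a deterministic threshold rule depending only on (items bought, items remaining), so two copies fed the same $n$ incident edges in the same order make identical decisions and buy the same three edges. There is no second independent star. The paper fixes this very issue by flipping a fair coin for each revealed edge and assigning it to the red ($U$-side) or blue ($V$-side) pool; then each of the $2n$ vertices runs a \emph{single} $3$-purchase on its $\approx n/2$ same-colored incident edges, giving expected cost $\approx 2n\cdot c_3/(n/2)=4c_3$ and producing a graph genuinely close to $B_{3\text{-}out}$ (random $3$-out from each side). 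Second, you treat the existence of a perfect matching in the purchased graph as "the genuine obstacle" and sketch a Hall-violator union bound without closing it. The paper does not carry out any such analysis: it appeals directly to Walkup's theorem \cite{Walk} that $B_{3\text{-}out}$ has a perfect matching with failure probability $o(1/n)$, together with a short argument (a Poisson-type computation) that conditioning on no edge being chosen by both endpoints changes this failure probability by at most a constant factor. Citing Walkup is what makes the proof short; the tail bound you gesture at is exactly the content of that theorem (in the non-adversarial setting) and is not something one wants to re-derive from scratch. Your worry about adversarial biasing of a purchased star toward a small set is a legitimate concern for the AOM claim, but the paper does not address it either — it is not a point your proposal resolves, and as written both of your upper-bound steps are gaps.
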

where $c_3$ is as in Theorem \ref{rhoNk}. Here $a\lesssim b$ denotes $a\leq (1+o(1))b$.

After this we turn to the non-bipartite case. 
\begin{theorem}\label{PM}
If $\nu^{\text{POM}}_n,\nu^{\text{ROM}}_n,\nu^{\text{AOM}}_n$ are the POM, ROM and AOM purchase prices of a perfect matching in $K_n$, we have 
\[
2\lesssim \nu^{\text{POM}}_n\leq \nu^{\text{ROM}}_n\lesssim \nu^{\text{AOM}}_n\lesssim 4c_3.\]
\end{theorem}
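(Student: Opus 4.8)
The plan is to reduce the non-bipartite case to the bipartite case of Theorem~\ref{BPM}, which supplies both the lower bound $2\lesssim\nu_n^{\text{POM}}$ (via essentially the same argument as in the bipartite setting — any perfect matching on $K_n$ uses $n/2$ edges, each of which must survive an inspection, so a first-moment/secretary-style bound gives $\gtrsim 2\cdot\frac{1}{2}=1$ per... actually $n/2$ edges each costing $\gtrsim \frac{2}{N}$ with $N=\binom n2$, summing to $\gtrsim 2$) and the upper bound $\nu_n^{\text{AOM}}\lesssim 4c_3$. For the lower bound I would carry over the POM lower bound argument used for $K_{n,n}$ verbatim, since it only uses that a perfect matching consists of $n/2$ disjoint edges and that each purchased edge costs at least the expected minimum of the process restricted to it. The inequalities $\nu_n^{\text{POM}}\leq\nu_n^{\text{ROM}}\lesssim\nu_n^{\text{AOM}}$ are structural: more adversarial control over the order can only hurt the purchaser, and ROM can be simulated within AOM up to $(1+o(1))$ factors by the same coupling used in Theorem~\ref{BPM}.

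The substantive content is the upper bound $\nu_n^{\text{AOM}}\lesssim 4c_3$, and here I would mimic the $K_{n,n}$ strategy. First I would split the $n$ vertices of $K_n$ arbitrarily into two halves $A$ and $B$ of size $n/2$ and essentially ignore all edges inside $A$ and inside $B$, working only with the $\binom{n/2}{1}^2 \approx n^2/4$ cross edges; this immediately embeds an instance that behaves like a slightly smaller bipartite problem, at the cost of only being able to build a matching that pairs $A$ with $B$ — which is fine, since such a matching is a perfect matching of $K_n$. Then I would run the AOM strategy from the proof of Theorem~\ref{BPM} on these cross edges. The key quantitative input is that in that strategy each of the $n/2$ matching edges is obtained by solving (a minor variant of) a $3$-purchase problem on a set of roughly $n$ candidate edges, contributing $\approx \frac{c_3}{n}$ each by Theorem~\ref{rhoNk}, for a total of $\frac n2\cdot\frac{c_3}{n}=\frac{c_3}{2}$ — so I need to recheck that the constant works out to $4c_3$ with the halving of the vertex set folded in (the factor $4$ versus $\binom n2$-normalization is exactly what makes $K_n$ match $K_{n,n}$).

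The main obstacle I anticipate is handling the dependence/conditioning when edges are revealed adaptively: after committing some matching edges, the remaining available vertices and edges form a random (adversary-influenced) sub-instance, and one must argue that the per-edge cost bound $\approx c_3/n$ degrades by at most a $1+o(1)$ factor throughout the process (e.g. for all but $o(n)$ of the $n/2$ steps, at least $(1-o(1))n$ suitable cross-edges remain uninspected and their costs are still uniform conditioned on the filtration). This is precisely the kind of bookkeeping done in the bipartite proof, so I would set it up so that Theorem~\ref{BPM}'s analysis applies as a black box to the cross-edge instance; the only genuinely new point is verifying that discarding the $\binom{n/2}{2}$ within-part edges costs nothing asymptotically, which is clear because an $A$-$B$ perfect matching exists and the normalization $N=\binom n2$ already accounts for the ignored edges. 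A secondary check is that Corollary~\ref{cor1} is not needed here (unlike in Theorem~\ref{TSP}) because cross-edge costs remain exactly uniform $[0,1]$; if any rescaling does creep in from conditioning, Corollary~\ref{cor1} covers it.
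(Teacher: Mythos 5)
Your approach -- halve the vertex set, run the bipartite strategy of Theorem~\ref{BPM} on the cross-edges for the upper bound, and port the bipartite one-purchase reduction for the lower bound -- is exactly what the paper does, so the high-level plan is right. The paper's proof is in fact very terse: take $n=2m$, look only for a matching between $[1,m]$ and $[m+1,2m]$, invoke Section~\ref{BPM0} for the upper bound, and for the lower bound observe that any matching algorithm yields a solution to the one-purchase problem as in the $K_{n,n}$ lower bound.

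Two arithmetic points you should tighten. First, your initial lower-bound attempt is wrong: $n/2$ edges each costing $\gtrsim 2/\binom{n}{2}$ sums to $\approx 2/n$, not $2$. The intended argument is the symmetry/averaging one from Theorem~\ref{BPM}: if $\nu_n=c$, then (by the $S_n$-symmetry of vertices) the expected cost of vertex~$1$'s matching edge is some quantity that the one-purchase bound forces to be $\gtrsim 2/n$. But note that carrying this over \emph{verbatim} from $K_{n,n}$ does not immediately give the claimed constant $2$: in $K_{n,n}$ each of the $n$ matching edges is owned by a unique left-vertex, so summing the $n$ per-vertex lower bounds of $2/n$ gives $2$ with no double-count, whereas in $K_n$ each matching edge is shared by two vertices, so the same accounting only gives $\nu_n\gtrsim 1$. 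You need to match the paper's bookkeeping (their formula $\frac{n}{2}\cdot\frac{2}{n/2}$ amounts to a one-purchase bound of $2/(n/2)=4/n$ per edge over the $n/2$ cross-edge options), and you should convince yourself that restriction/accounting is justified rather than inheriting it silently. Second, your upper bound constant $c_3/2$ is off by a factor of $8$: each of the $n$ vertices has only $\approx n/2$ cross-edges, about $n/4$ of which get its colour, so each $3$-purchase costs $\approx c_3/(n/4)=4c_3/n$, giving $n\cdot 4c_3/n=4c_3$ as required. You already flag this; the fix is as above. Your concern about adaptive conditioning is not a real obstacle -- the restriction to cross-edges literally embeds a $K_{m,m}$ instance and Theorem~\ref{BPM}'s argument applies unchanged, exactly as the paper asserts.
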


We can also purchase a Traveling Salesperson tour at constant expected cost:
\begin{theorem}\label{TSP}
If $h^{\text{POM}}_n,h^{\text{ROM}}_n,h^{\text{AOM}}_n$ are the POM, ROM and AOM purchase prices of a Hamilton cycle in $K_n$, then
$$c_2\lesssim h^{\text{POM}}_n\leq h^{\text{ROM}}_n\lesssim h^{\text{AOM}}_n\lesssim 200.$$
\end{theorem}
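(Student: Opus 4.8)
The plan is to establish the four-part chain of inequalities separately, with the main effort going into the upper bound $h^{\text{AOM}}_n \lesssim 200$. The lower bound $c_2 \lesssim h^{\text{POM}}_n$ is the easiest: any Hamilton cycle on $K_n$ uses exactly $n$ edges, and each vertex is covered by two of them, so in particular we must purchase at least two edges incident to vertex $1$. Even with full POM control, purchasing two edges out of the $n-1$ available at a fixed vertex is an instance of the $k$-purchase problem with $k=2$ and $N=n-1$, which by Theorem \ref{rhoNk} costs $\gtrsim c_2/(n-1)$; multiplying by $n$ (the number of vertices, each of whose incident edges must be "paid for" in this amortized sense) is too crude, so instead I would argue more carefully that the total cost is at least $\sum_{v} (\text{cost of two cheapest purchased edges at } v)/2 \gtrsim c_2$, using Corollary \ref{cor1} to handle the fact that after conditioning the relevant minima are not exactly uniform but have the right density at $0$. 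The middle inequalities $h^{\text{POM}}_n \leq h^{\text{ROM}}_n \leq h^{\text{AOM}}_n$ (up to $(1+o(1))$ factors) should follow from the general principle that more control over the inspection order cannot hurt — a POM strategy can simulate any ROM strategy by inspecting in a random order, and a ROM analysis gives a valid AOM strategy when the ROM algorithm's decisions do not actually exploit the randomness of the order beyond what an adversary could force; I would state and prove this reduction once, in a preliminary lemma, and invoke it here.

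For the upper bound, the idea is to build the Hamilton cycle in a small number of stages, in each stage buying a cheap structure that makes progress, and to bound the per-stage cost by a constant using the $k$-purchase tool. A natural approach: first purchase a perfect matching $M_1$ cheaply (cost $\lesssim 4c_3$ by Theorem \ref{PM} in the AOM model), which covers all vertices with $n/2$ disjoint edges; then purchase a second matching that, together with $M_1$, forms a disjoint union of cycles — i.e., iterate the matching-purchase a bounded number of times and then patch the resulting cycle cover into a single Hamilton cycle. The patching of $O(1)$... actually of a possibly growing number of cycles into one is the delicate point, so instead I would use the standard trick: after obtaining a cycle cover with $k$ components, we can merge them using $2k$ edge swaps, but $k$ may be $\Theta(\log n)$, so each swap must be cheap. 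The cleaner route is to reserve a random subset of vertices (say a constant fraction, or $\Theta(n/\log n)$ of them) as "connectors," build the structure on the rest, and then show that with the reserved vertices and their incident edges — which are still uninspected and hence fresh uniform variables — we can cheaply stitch everything together; this is where I would lean on a ROM-style argument (valid in AOM by the reduction above) that repeatedly solves small $k$-purchase instances.

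The hard part will be controlling the stitching/patching phase so that its total cost is $O(1)$ rather than growing: each individual merge operation is cheap (a single edge purchased from many fresh candidates costs $O(1/n)$ by the $k=1$ case of Theorem \ref{rhoNk}), but there may be up to $\Theta(\log n)$ or even $\Theta(n)$ of them if we are naive, and $\Theta(n) \cdot O(1/n) = O(1)$ only works if the number of merges is genuinely $O(n)$ with a small constant and each merge really does have $\Omega(n)$ fresh candidate edges available. I would therefore organize the construction so that the cycle cover after the matching phases already has at most $\epsilon n$ components for a small constant $\epsilon$ (achievable because two random perfect matchings form a cycle cover with $O(\log n)$ cycles w.h.p., which is far fewer than $\epsilon n$), and then bound the merging cost by $O(\log n) \cdot O(1/n) = o(1)$, so the dominant term is the $O(1)$ from the constantly-many matching purchases, giving a final constant like $200$ after bookkeeping. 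The constant $200$ itself is not optimized; the point is that $4c_3$ per matching times a bounded number of matchings, plus lower-order merging costs, plus slack for the AOM-vs-ROM reduction, lands below $200$.
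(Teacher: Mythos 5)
Your lower bound and the chain $h^{\text{POM}}_n\leq h^{\text{ROM}}_n\leq h^{\text{AOM}}_n$ match the paper's reasoning in spirit (the paper's lower bound is exactly the reduction to the $2$-purchase problem at a fixed vertex). Your proposed upper bound, however, takes a genuinely different route from the paper, and it has gaps that would need real work to fill.

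The paper never builds a cycle cover and never patches. Instead it constructs the $k$-out random graph $G_{10\text{-out}}$ directly: it decomposes each uniform cost $X_e$ as $X_e=\min\{Z_{e,1},\dots,Z_{e,10}\}$ for i.i.d.\ variables $Z_{e,j}$, and for each vertex $v$ and each $j$ it runs the $1$-purchase algorithm (via Corollary~\ref{cor1}) on the $Z_{e,j}$-values of the edges incident with $v$. The resulting graph is distributed like $G_{10\text{-out}}$, which is Hamiltonian with failure probability $o(1/n)$ by \cite{FA} (Bohman--Frieze \cite{BF} gives $3$-out Hamiltonicity but not the needed failure probability). The per-vertex, per-copy cost is $\approx 10\cdot 2/n$, giving $\approx 10\cdot 10\cdot 2=200$ in total. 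Crucially, this construction is \emph{order-oblivious}: the only thing it does with each edge is run $S_N$-symmetric $1$-purchase subroutines on the set of edges at a vertex, so the adversary's ordering is irrelevant. That is exactly why the bound transfers to AOM for free.

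Your matching-plus-patching plan does not have this property, and that is where the gaps are. First, the ``reserve a random set of vertices whose incident edges are still uninspected'' step is simply not available in AOM: the adversary is adaptive and can present precisely those edges first, so by the time you reach the patching phase nothing is fresh. Even in ROM, roughly half the reserved edges appear before you have finished building the cycle cover, and the patching edges you need are between \emph{specific} components that you only learn late, so you cannot know in advance which early edges to grab. Second, the claim that two cheaply-purchased perfect matchings form a cycle cover with $O(\log n)$ components is borrowed from the theory of two \emph{uniform} random perfect matchings; the matchings coming out of Theorem~\ref{PM} are heavily biased towards low-cost edges and there is no reason a priori for the union's cycle statistics to be similar. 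Without controlling the number and sizes of cycles, the merging cost is not obviously $o(1)$. Both of these issues disappear in the paper's $G_{10\text{-out}}$ construction, which is the reason that approach is the one used.

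One more small note: be careful with the lower bound bookkeeping. Summing a $\gtrsim c_2/n$ bound for the pair of cycle edges at each of the $n$ vertices and dividing by two for the double count gives $c_2/2$, not $c_2$; the paper's statement of $c_2\lesssim h^{\text{POM}}_n$ is asserted without spelling out how the extra factor of two is recovered, so if you want a self-contained argument you should either be content with $c_2/2$ or supply the missing step.
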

In Theorems \ref{thMST}, \ref{BPM}, \ref{PM}, \ref{TSP}, we prove the purchase prices are $\Theta(1)$, but we do not determine an asymptotic constant.  In our final example, we can determine such a constant.

In particular, we consider the case where the underlying set of $\cF$ is the set of $n(n-1)$ directed edges in the complete digraph $\dK_n$, and the target sets are the \emph{arborescences}; i.e., directed rooted spanning trees, such that every edge is oriented towards the root.
\begin{theorem}\label{tharb} If $\a_n^{\text{POM}},\a_n^{\text{ROM}}$ are the POM and ROM purchase prices of an arborescence in $\dK_n$,  then 
\[
\lim_{n\to \infty}\a_n^{\text{POM}}=\lim_{n\to \infty}\a_n^{\text{ROM}}=2.
\]
Furthermore, in the AOM model we can w.h.p. construct an arborescence of cost $O(1)$.
\end{theorem}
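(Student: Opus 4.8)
The plan is to establish $\lim_{n\to\infty}\a_n^{\text{POM}}=\lim_{n\to\infty}\a_n^{\text{ROM}}=2$ by sandwiching.  Since a POM purchaser may first draw a uniformly random inspection order and then follow any ROM strategy, $\a_n^{\text{POM}}\le\a_n^{\text{ROM}}$, so it suffices to prove the POM lower bound $\a_n^{\text{POM}}\gtrsim 2$ and the ROM upper bound $\a_n^{\text{ROM}}\lesssim 2$.  Everything hinges on the fact that the $n(n-1)$ directed edges are \emph{partitioned} by the out-stars $S_v=\{(v,w):w\ne v\}$, $|S_v|=n-1$, and that a spanning arborescence with root $r$ uses exactly one edge of $S_v$ for every $v\ne r$ and none of $S_r$.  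Thus building an arborescence contains a forced $1$-purchase problem on each of $n-1$ of the $n$ out-stars, and by Theorem~\ref{rhoNk} (with $c_1=2$) we have $n\cdot\r_{n-1,1}\approx n\cdot\frac 2{n-1}\to 2$, which is the source of the constant.

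\emph{Lower bound.}  Any POM strategy eventually designates a root $R$ and must, for each $v\ne R$, have purchased at least one out-edge of $v$; hence the arborescence problem dominates the relaxed problem of completing a forced $1$-purchase on at least $n-1$ of the $n$ out-stars.  Running Moser's optimal $1$-purchase on $n-1$ of the out-stars solves the relaxed problem for $(1+o(1))\cdot 2$; the claim is that no strategy does asymptotically better, i.e. that the single permitted ``skip'' (the choice of $R$) saves only $o(1)$.  The reason is that to profit from skipping $S_v$ the purchaser must first drive it into a state where many of its edges have been inspected and rejected, but for any strategy the probability that a fixed out-star ever reaches a state with only $m$ un-inspected edges, all previously rejected, is $O\big((m/n)^2\big)$ — this is exactly Moser's survival-probability estimate — so already for $m=n^{1/4}$ this is $o(1/n)$; w.h.p.\ no out-star is ever in such a state, and forcibly completing the skipped one then costs only $\r_{n^{1/4},1}=O(n^{-1/4})=o(1)$ in expectation.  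Making this $o(1)$-savings bound rigorous, uniformly over all strategies (including those that inspect many edges of an out-star before skipping it), is the delicate part of the lower bound.

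\emph{Upper bounds.}  In POM, maintain a forest of vertex-disjoint in-arborescences, initially $n$ singletons; while more than one component remains, pick a component $C$ of smallest size $s$ with local root $r$ and run Moser's $1$-purchase on the $n-s$ edges of $S_r$ leading outside $C$, buying one, say $(r,w^*)$, and merging $C$ into the component of $w^*$.  The only out-edges ever inspected are those of the local root of the component currently being merged, and such a vertex never becomes a local root again, so the $n-s$ candidate edges are always entirely un-inspected and nothing is wasted.  Since the $n-i$ components present at step $i$ partition $[n]$, the smallest has size $s_i\le n/(n-i)$, so step $i$ costs $\r_{n-s_i,1}\le(1+o(1))\frac 2n\cdot\frac{n-i}{n-i-1}$ in expectation, and summing over $i=0,\dots,n-2$ gives $(1+o(1))\frac 2n\big((n-1)+H_{n-1}\big)\to 2$.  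In ROM we cannot choose which merge happens, so instead run, for each vertex $v$ independently, a Moser $1$-purchase on $S_v$ in the ROM-induced order, modified to preserve acyclicity: a current local root never buys an out-edge leading into its own current component, and once that component exceeds $n/2$ vertices we permanently decline to buy an out-edge for its root (reserving it as the eventual root).  This never deadlocks — a local root always has an allowed out-edge and Moser eventually takes one — and at the end produces a single component, a spanning arborescence.  The base cost is $\sum_v\r_{n-1,1}=2+o(1)$; the modification only forbids, for a vertex committing while its component has size $t\le n/2$, at most a $t/n$-fraction of its out-edges, inflating that vertex's expected cost by a factor $O(1)$.  The structural input is that until only $o(n)$ vertices remain uncommitted the functional forest is subcritical (a local root receives $<1$ child in expectation), so all components have size $O(\log n)$ w.h.p.; hence all but $O(n^{3/4})$ vertices commit with forbidden fraction $n^{-\Omega(1)}$ (total extra cost $o(1)$) while the remaining $O(n^{3/4})$ contribute $O(1/n)$ each, and since the total spend never exceeds $n-1$ the $o(1/n)$-probability event that a component becomes large too early contributes $o(1)$ to the expectation.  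Controlling this component-size evolution, and the attendant endgame bookkeeping, is the main technical obstacle.

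\emph{AOM.}  Here one only needs cost $O(1)$ w.h.p.  Run, for each vertex $v$, a $2$-purchase (Theorem~\ref{rhoNk} with $k=2$) on $S_v$ in the adversary's order; since Moser-type thresholds are oblivious to the order in which other vertices' edges appear and the adversary cannot see un-inspected costs, this costs $(1+o(1))c_2/n$ per vertex with exponential tails, so $\sum_v$ concentrates near $c_2=3+\sqrt5=O(1)$ w.h.p.  The purchased edges form a $2$-out-regular digraph which w.h.p.\ has a strongly connected component reachable from every vertex (random $2$-out digraphs are far above the relevant threshold, and a cost-oblivious adversary cannot bias the edge-endpoints enough to prevent this); rooting a spanning in-arborescence at any vertex of that component then costs nothing more.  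The one point requiring care is verifying that the adversary's control of the \emph{order} of presentation cannot destroy this property of the resulting $2$-out digraph.
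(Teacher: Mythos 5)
Your lower bound is essentially the paper's: each non-root vertex must solve a forced $1$-purchase on its out-star, so the total is $\gtrsim (n-1)\r_{n-1,1}\approx 2$, with the ``root skip'' costing only $o(1)$. Your POM upper bound via a Prim-style greedy merge on the smallest component is a genuinely different (and nice, self-contained) route; the paper simply inherits the POM upper bound from ROM via $\a^{\text{POM}}_n\le\a^{\text{ROM}}_n$.

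The serious issue is in your ROM upper bound. You run in-place Moser $1$-purchases on every out-star simultaneously, suppressing purchases of edges that would create a cycle, and your cost control rests on the claim that ``until only $o(n)$ vertices remain uncommitted the functional forest is subcritical\ldots so all components have size $O(\log n)$ w.h.p.'' This is false. After $k$ vertices have committed, the (undirected) forest has $n$ vertices and $k$ essentially random edges; once $k$ passes $n/2$ a giant component of size $\Theta(n)$ emerges, and long before $k=n-o(n)$ the giant has a linear fraction of all vertices. Consequently the ``forbidden fraction'' for roots of the giant is $\Theta(1)$ (not $n^{-\Omega(1)}$), and the giant's root changes $\Theta(n)$ times as it absorbs small components, so you cannot charge only $O(n^{3/4})$ vertices with an inflated threshold. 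The bookkeeping you deferred to is exactly where the argument breaks; you would need a genuinely different accounting of how often a committing vertex belongs to a large component, and of the fact that the Moser thresholds themselves are being applied to a shrinking set of \emph{allowed} edges rather than all remaining edges. The paper avoids this entirely: it does \emph{not} try to stay acyclic. It reserves a thin ($\e=1/\log n$) slice of blue edges, lets every vertex run a clean $1$-purchase on the red edges to produce a random mapping digraph $D_f$ (cost $\approx 2$, no forbidden edges), observes $D_f$ has only $Z$ components with $Z\le(\log n)^2$ with probability $1-o(n^{-3})$, then deletes one edge per cycle and patches the $Z$ resulting trees together with cheap ($\le n^{-3/4}$) blue edges, for an extra cost of $(\log n)^2\cdot n^{-3/4}=o(1)$. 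This ``buy greedily, repair cheaply'' decomposition is the key idea your ROM argument is missing.

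For AOM, the paper again sidesteps the structural-randomness issue you flag by appealing to the directed TSP result (Theorem \ref{DTSP}): a Hamilton path is an arborescence, and that theorem already handles adversarial ordering. Your route via a $2$-out digraph is plausible but, as you note, hinges on the unproven claim that the adversary cannot skew the realized $2$-out structure; that would require an argument (the adversary does control which out-neighbours Moser tends to select by choosing the presentation order within each out-star), and the paper does not supply or need one.
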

Note that the w.h.p result does not necessarily imply an $O(1)$ purchase price for the arborescence in the AOM model.  An interesting open question is whether $\lim_{n\to \infty}\a_n^{\text{AOM}}=2$ as well.

As a final result, we prove w.h.p results for the Traveling Salesperson Problem (TSP) on the complete digraph $\vec{K}_n$. Again, it is not clear that this implies results for the purchase price (expected value).
\begin{theorem}\label{DTSP}
In the directed case of the TSP we have that w.h.p. 
$$4\lesssim H^{\text{POM}}_n\leq H^{\text{ROM}}_n\lesssim H^{\text{AOM}}_n\lesssim 4c_2,$$
where $H^{model}$ is the cost of the tour we can find w.h.p in the model.
\end{theorem}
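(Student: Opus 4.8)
\emph{The plan.} The statement packages a lower bound, two monotonicity relations and an upper bound, so I would prove only $\E[\mathrm{cost}(H)]\ge 4-o(1)$ for the tour $H$ output by any POM strategy, together with an AOM strategy that builds a tour of total cost $(4+o(1))c_2$ with high probability; everything else follows by monotonicity. First, $H^{\text{POM}}_n\le H^{\text{ROM}}_n$ because a POM player may choose to inspect the edges of $\dK_n$ in a fresh uniformly random order, thereby simulating any ROM strategy, and Fubini's theorem converts the ROM guarantee (over weights \emph{and} orderings) into a guarantee over weights alone. Likewise $H^{\text{ROM}}_n\le H^{\text{AOM}}_n$: a strategy succeeding with high probability against every adaptive adversarial order in particular succeeds when the order is drawn uniformly and independently of the weights --- which is exactly ROM --- since the AOM failure probability is controlled uniformly over orders.

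\emph{Lower bound.} The $n-1$ out-edges at a fixed vertex $v$ form one block of a partition of $E(\dK_n)$ into $n$ blocks, and a directed Hamilton cycle picks exactly one edge from each block, so $\mathrm{cost}(H)=\sum_v x_{\mathrm{out}(v)}$. Restricting any strategy to the out-edges at $v$ --- whose costs are independent of everything else, including the order in which non-$v$ edges arrive --- it must keep at least one, so by Theorem~\ref{rhoNk} we get $\E[x_{\mathrm{out}(v)}]\ge\r_{n-1,1}\approx 2/n$, and summing gives the crude $\E[\mathrm{cost}(H)]\ge 2-o(1)$. To push this to $4$ I would use the single-cycle constraint: at the moment the out-edge used at $v$ is committed, its head must not close a short cycle with the already-committed edges, which forbids a positive fraction of the $n-1$ candidate heads, so $v$ effectively faces a $1$-purchase over $\sim n/2$ out-edges and pays $\gtrsim 4/n$ in expectation; summing over $v$ yields $\E[\mathrm{cost}(H)]\ge 4-o(1)$. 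This mirrors the argument behind the POM lower bound of Theorem~\ref{TSP}, and with the (uniform) upper bound it gives $4\lesssim H^{\text{POM}}_n$.

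\emph{Upper bound.} I would run, interleaved throughout the adversary's stream, a capped $2$-purchase at every vertex in each direction: when the adversary reveals the $t$-th hitherto-unseen out-edge at $v$ and we hold $b<2$ kept out-edges at $v$, buy it iff its cost lies below the optimal threshold for acquiring $2-b$ further items among $n-1-t$ remaining (from the proof of Theorem~\ref{rhoNk}), \emph{unless} the head already sits at a fixed degree cap in the kept digraph, in which case refuse; symmetrically for in-edges. Because the adversary cannot see unrevealed costs, the out-edges at $v$ arrive in a cost-independent order, so this routine is distributed as the $2$-purchase problem of Theorem~\ref{rhoNk} on a pool the caps shrink by a factor $\approx 2$: the expected spend on $v$'s out-edges is $\lesssim 2c_2/n$, similarly for in-edges, for total expected cost $\lesssim 4c_2$, which an edge-exposure martingale concentrates. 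The caps also stop the adversary from piling edges onto a few vertices, so the kept digraph $D$ has minimum in- and out-degree $2$ with essentially random, well-spread neighbourhoods; a Hamiltonicity result for such digraphs (e.g.\ the union of a random $2$-out and a random $2$-in digraph) then extracts a directed Hamilton cycle inside $D$ at no extra cost, with high probability. Chaining with the monotonicity relations and the lower bound gives the whole display.

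\emph{Main obstacle.} The hard part will be the AOM upper bound, and in particular the interplay between the capped routine and Hamiltonicity: I must show that against every adaptive order the kept digraph $D$ still attains minimum in- and out-degree $2$ (no vertex starved by cap refusals) and has neighbourhoods random enough for a random-digraph Hamiltonicity theorem to apply after conditioning on the algorithm's run. Getting the constant to exactly $4c_2$, rather than a larger multiple of $c_2$, further requires tuning the cap so that the effective $2$-purchase pool at each vertex has size $(1-o(1))n/2$.
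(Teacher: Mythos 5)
Your high-level plan matches the paper's: reduce the upper bound to the Cooper--Frieze theorem that $G_{2\text{-in},2\text{-out}}$ is Hamiltonian w.h.p., and use the $1$-purchase bound of Theorem~\ref{rhoNk} at each vertex for the lower bound. But there are two substantive gaps in execution, and one mismatch with what the theorem actually asserts.

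\textbf{Upper bound.} The paper's essential device, imported from the undirected case, is to write $X_e=\min\{Z_{e,1},\dots,Z_{e,4}\}$ with the $Z_{e,j}$ i.i.d., and to run $u$'s out-selection off $(Z_{e,1},Z_{e,2})$ and $v$'s in-selection off $(Z_{e,3},Z_{e,4})$ for each arc $e=(u,v)$. The point is that the out-choices and in-choices then read \emph{disjoint independent randomness}, so the purchased digraph is exactly coupled to $G_{2\text{-in},2\text{-out}}$ and \cite{CF} applies verbatim; the apparent waste (four slots per arc but one observed cost) is exactly what produces the paper's constant $2\times 2\times c_2$ via Corollary~\ref{cor1}. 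Your ``capped $2$-purchase on the raw $X_e$'' discards this independence: $u$'s out-purchase and $v$'s in-purchase both condition on the same value $X_{(u,v)}$, so the out-digraph and in-digraph are positively correlated, and the union is \emph{not} distributed as $G_{2\text{-in},2\text{-out}}$; the degree cap you add to discipline the adversary compounds the distortion. You flag this yourself as the ``main obstacle'', but it is a real gap, not a technicality --- \cite{CF} simply does not cover your correlated, cap-truncated digraph, and establishing a Hamiltonicity theorem for it would be a new piece of work. (Your route would also produce a different constant: a $2$-purchase on raw uniform costs over a pool of size $\sim n/2$ is what would give $4c_2$, but nothing forces the cap to shrink the pool to $n/2$; without the $Z$-split the natural constant you'd compute is $2c_2$, and it is the slot decomposition, not the cap, that explains the paper's $4c_2$.)

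\textbf{Lower bound.} Your first pass, $\sum_v x_{\mathrm{out}(v)}\geq n\cdot\rho_{n-1,1}\approx 2$, is correct but gives $2$, not $4$. The paper's (terse) route to $4$ is that each vertex faces \emph{two} independent $1$-purchase problems over disjoint arc-sets, one among its $n-1$ out-arcs and one among its $n-1$ in-arcs, contributing $\approx 4/n$ per vertex. Your ``single-cycle constraint'' step is a different and unjustified way to try to reach $4$: the claim that a positive fraction of heads is forbidden ``at the moment of commitment'' depends entirely on when the algorithm commits, which it controls; an algorithm may commit most out-arcs early (when essentially nothing is forbidden), or hold spare candidates and decide later, and nothing in your sketch pins the effective pool at $\sim n/2$. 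Finally, note that Theorem~\ref{DTSP} is a w.h.p.\ statement about ``the cost of the tour we can find'', not an expectation statement --- the paper explicitly says it \emph{cannot} claim an expectation bound here because the failure probability from \cite{CF},\cite{CF1} is not small enough --- so your plan to prove $\E[\mathrm{cost}(H)]\geq 4-o(1)$ is aimed at a somewhat different target than what is claimed.
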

\section{Preliminaries}
For reference we use the following: Let $B(n,p)$ denote the binomial random variable with parameters $n,p$. Then for $0\leq\e\leq 1$ and $\a>0$, we have the following Chernoff bounds:
\begin{align}
\Pr(B(n,p)\leq (1-\e)np)&\leq e^{-\e^2np/2}.\label{chern1}\\
\Pr(B(n,p)\leq (1+\e)np)&\leq e^{-\e^2np/3}.\label{chern2}\\
\end{align}
We will also make use of a simple application of the Azuma-Hoeffding martingale tale inequality, often referred to as McDiarmid's inequality \cite{McD}. Let $Z=Z(Y_1,Y_2,\ldots,Y_N)$ where $Y_1,Y_2,\ldots,Y_N$ are independent random variables. Suppose that if ${\bf Y}=(Y_1,Y_2,\ldots,Y_N)$
and ${\bf Y'}=(Y_1',Y_2',\ldots,Y_N')$ differ in only one coordinate that $|Z({\bf Y})-Z({\bf Y}')|\leq c$. Then for any $u>0$ we have
\beq{MCD}{
\Pr(|Z-\E Z|\geq u)\leq 2\exp\set{-\frac{u^2}{2Nc^2}}.
}
\section{Purchasing any $k$ items}\label{s.lists}
The symmetry of the $k$-uniform hypergraph means that the order in which we inspect the variables $x_i$ is irrelevant.  Thus, we suppose they are simply given in some fixed order $x_N,x_{N-1},\ldots,x_1$. In other words, in this context, there is essentially no difference between any of the three models under consideration. At step $i$ (beginning with $i=N$), we examine $x_i$, and either \emph{accept} $i$ and pay $x_i$, or \emph{reject} $i$ and continue to step $i-1$.  The process ends after $k$ acceptances.  Note that since we are required to purchase $k$ items, we will have to pay all of $x_{\ell},x_{\ell-1},\dots,x_1$ to accept $\ell,\dots,1$ if only $k-\ell$ items have been accepted before index $\ell$ is reached.

Our goal is to minimize the expected total cost; we denote this expected value by $\rho_{N,k}$.
\subsection{The case $k=1$}\label{k1} 
We write $\rho_{N}:=\rho_{N,1}$ for the expected cost in this case.  When we inspect $x_N$, an optimum strategy to minimize our total expected cost is to accept $x_N$ if and only if $x_N<\rho_{N-1}$, since $\rho_{N-1}$ is the expected cost when we reject $x_N$.  This simple dynamic will allow us to prove the case $k=1$ of Theorem \ref{rhoNk} by analyzing the process recursively.
\begin{proof}
Let $Z$ denote the cost of the element selected by the optimal strategy so that $\r_N=\E(Z)$. Then we have the following recurrence:
\begin{align}
\r_N&=\E(Z\mid x_N\leq \r_{N-1})\Pr(x_N\leq \r_{N-1})+\E(Z\mid x_N> \r_{N-1})\Pr(x_N> \r_{N-1})\label{E}\\
&=\frac{\r_{N-1}^2}{2}+\r_{N-1}(1-\r_{N-1})\\
&=\r_{N-1}\brac{1-\frac{\r_{N-1}}{2}}.\label{rhorec}
\end{align}
We can now easily get an upper bound on $\r_N$ by induction. Clearly, $\r_1=1/2$ and if $\r_{N-1}\leq 2/N$ for some $N>1$ then we see from \eqref{rhorec} that,
\beq{upperrho}{
\r_N\leq \frac{2}{N}\brac{1-\frac{1}{N}}\leq \frac{2}{N+1}.
}
For a corresponding lower bound let $\e_N=A/N$ where $A=10$ suffices. We have checked numerically that $\r_N\geq 2(1-\e_{N-1})/(N-1)$ for $N\leq 3(A+1)=33$. Now assume that $N\geq 3(A+1)$ and $\r_{N-1}\geq2(1-\e_{N-1})/(N-1)$. Then, because $x(1-x/2)$ increases monotonically in $[0,1]$,
\begin{align}
\r_N&\geq \frac{2(1-\e_{N-1})}{N-1}-\frac{(1-\e_{N-1})^2}{(N-1)^2}\\
&=\frac{2(1-\e_{N})}{N}+\frac{2(\e_N-\e_{N-1})}{N-1}+\frac{2(1-\e_N)}{N(N-1)} -\frac{(1-\e_{N-1})^2}{(N-1)^2}\\
&=\frac{2(1-\e_{N})}{N}-\frac{2A}{N(N-1)^2}+ \frac{2(N-A)}{N^2(N-1)}-\frac{(N-1-A)^2}{(N-1)^4}\\
&\geq \frac{2(1-\e_{N})}{N}-\frac{2A}{N(N-1)^2}+ \frac{2(N-A)}{N^2(N-1)}-\frac{(N-A)^2}{N^2(N-1)^2}\\
&=\frac{2(1-\e_{N})}{N}+\frac{N^2-2(A+1)N-(A^2-2A)}{N^2(N-1)^2}\\
&\geq \frac{2(1-\e_{N})}{N},
\end{align}
since $N\geq 3(A+1)$.
\end{proof}
\subsection{The case $k\geq2$}\label{more}
We now complete the proof of Theorem \ref{rhoNk}.
\begin{proof}
We replace \eqref{E} by 
\begin{align}
\r_{k,N}&=\E\left(\min\set{x_N+\r_{k-1,N-1},\r_{k,N-1}}\label{Ek}\right)\\
&=\E\left(x_N+\r_{k-1,N-1}\mid x_N<\r_{k,N-1}-\r_{k-1,N-1}\right)\Pr(x_N<\r_{k,N-1}-\r_{k-1,N-1})\\
&\hspace{2cm}+\rho_{k,N-1}\Pr(x_N\geq \r_{k,N-1}-\r_{k-1,N-1})\\
&=\brac{\frac{\r_{k,N-1}-\r_{k-1,N-1}}{2}+\r_{k-1,N-1}}(\r_{k,N-1}-\r_{k-1,N-1})+\r_{k,N-1}(1-(\r_{k,N-1}-\r_{k-1,N-1}))\\
&=\r_{k,N-1}-\frac{(\r_{k,N-1}-\r_{k-1,N-1})^2}{2}.\label{rhoreck}
\end{align}
Suppose inductively that $\r_{k,N-1}\leq \frac{c_k}{N},\,\r_{k-1,N-1}\leq \frac{c_{k-1}}{N}$. Then, \eqref{rhoreck} and $\r_{k,N-1}\geq \r_{k-1,N-1}$ implies that
\begin{align}
\r_{k,N}&=\r_{k,N-1}\brac{1-\frac{\r_{k,N-1}}{2}+\r_{k-1,N-1}}-\frac{\r_{k-1,N-1}^2}{2}\label{must}\\
&\leq \frac{c_k}{N}-\frac{(c_k-c_{k-1})^2}{2N^2}\\
&\leq\frac{c_k}{N+1}+\frac{2c_k-(c_k-c_{k-1})^2}{2N^2}\\
&=\frac{c_k}{N+1}.
\end{align}
Here the form of \eqref{must} implies that we should put $\r_{k,N-1},\r_{k-1,N-1}$ at their upper bounds to maximize the RHS.

For a lower bound, let $\e_N=1/N^{1/2}$ and assume inductively that 
$$\r_{k,N-1}\geq \brac{1-\e_{N-1}}c_k/(N-1),\r_{k-1,N-1}\geq \brac{1-\e_{N-1}}c_{k-1}/(N-1).$$ 
We will first prove that
\beq{cklow}{
\r_{k,k}=\frac{k}{2}\geq \brac{1-\frac{1}{k^{1/2}}}\frac{c_k}{k}
}
for $k\geq 2$, which will allow us a base for our induction.

If we let $1+2c_k=d_k^2$ then \eqref{ckdef} implies that 
\beq{dkdef}{
d_k^2=1+2c_k=1+2(c_{k-1}+1+d_{k-1})=d_{k-1}^2+2d_{k-1}+2=(d_{k-1}+1)^2+1.
}
Since $d_1>1$, this implies that 
$$d_k>k.$$
On the other hand, \eqref{dkdef} and $(1+x)^{1/2}\leq x^{1/2}+\frac{1}{2x}$ implies that
$$d_k\leq d_{k-1}+1+\frac{1}{2(d_{k-1}+1)^2}\leq d_{k-1}+1+\frac{1}{2k^2}.$$
It follows from this that
$$d_k\leq k+\sum_{r=1}^\infty \frac{1}{2r^2}<k+1\text{ which implies that }c_k\leq \frac{k^2}{2}\brac{1+\frac{2}{k}}.$$
We can see therefore that \eqref{cklow} holds if $\brac{1+\frac{2 }{k}}\brac{1-\frac{1}{k^{1/2}}}\leq 1$. This holds for $k\geq 2$ and the proof of \eqref{cklow} is complete.

We now use induction on $N$ with $N=k$ as the base case. Referring to \eqref{rhoreck}, the function
 $f(x,y)= x-(x-y)^2/2$ is monotone increasing in $x$ for $x \le 1+y$ and in $y$ for  $y \le x$. Thus we can substitute our lower bounds for $\r_{k, N-1}$ and $\r_{k-1,N-1}$ and use $2c_k=(c_k-c_{k-1})^2$ to get
\begin{align*}
\r_{k,N}  &\ge \brac{1-\e_{N-1}}\frac{c_k}{N-1}-
\frac{\brac{1-\e_{N-1}}^2}{2(N-1)^2}(c_k-c_{k-1})^2\\
&=
\brac{1-\e_{N-1}}\frac{c_k}{N-1}-\frac{(1-\e_{N-1})^2c_k}{(N-1)^2}
\\
&=\frac{(1-\e_N)c_k}{N}+\frac{(\e_N-\e_{N-1})c_k}{N}+ \frac{(1-\e_{N-1})c_k}{N(N-1)}-\frac{(1-\e_{N-1})^2c_k}{(N-1)^2}\\
&= \frac{(1-\e_N)c_k}{N} + c_k g(N),
\end{align*}
where
\begin{align}
g(x)&=\frac{\sqrt{1/x}-\sqrt{1/(x-1)}}{x}+\frac{1-\sqrt{1/(x-1)}}{x(x-1)}
-\frac{(1-\sqrt{1/(x-1)})^2}{(x-1)^2}\\
&=\frac{1-\sqrt{1+\frac{1}{x-1}}}{x^{3/2}}+ \bfrac{1-\frac{1}{\sqrt{x-1}}}{x-1} \brac{\frac{1}{x}-\frac{1-\frac{1}{\sqrt{x-1}}}{x-1}}\\
&\geq -\frac{1}{2x^{3/2}(x-1)}+\frac{(\sqrt{x-1}-1)(x-\sqrt{x-1})}{x(x-1)^{5/2}}\\
&\geq \frac{x(2\sqrt{x-1}-5)+2+\sqrt{x-1}}{2x^{5/2}(x-1)}\\
&\geq 0\text{ for }x\geq 7.
\end{align}
By direct evaluation $\r_{k,N}\geq (1-\e_N)c_k/N$ for $2\leq k<N\leq 6$.
\end{proof}
We will need the following estimates for $c_k$ as $k$ grows:
\begin{lemma}\label{cklemma}
If $k\geq 1$ then
$$\frac{k^2}{2}\leq c_k\leq 2k^2.$$
\end{lemma}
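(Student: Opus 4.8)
The plan is to prove the bounds $\tfrac{k^2}{2}\le c_k\le 2k^2$ by induction on $k$, working through the substitution $1+2c_k=d_k^2$ already introduced in the proof of Theorem \ref{rhoNk}, since the recurrence becomes much cleaner there. Recall from \eqref{dkdef} that $d_k^2=(d_{k-1}+1)^2+1$, with $d_1^2=1+2c_1=5$, so $d_1=\sqrt5$. From this identity one immediately reads off $d_k>d_{k-1}+1$ (since adding $1$ only increases the value), hence by induction $d_k>d_1+(k-1)>k+1$ for all $k\ge1$; actually $d_k > \sqrt 5 + (k-1) > k$, which already yields $1+2c_k=d_k^2>k^2$, i.e. $c_k>\tfrac{k^2-1}{2}$. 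To get the clean bound $c_k\ge \tfrac{k^2}{2}$ I would either sharpen this slightly (using $d_1=\sqrt5>2$, so $d_k>k+1$, giving $2c_k>(k+1)^2-1=k^2+2k$, hence $c_k\ge k^2/2$ comfortably), or just verify the very first case $c_1=2\ge 1/2$ directly and note the gap only widens.

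For the upper bound, the excerpt's proof of Theorem \ref{rhoNk} has already essentially done the work: from $(1+x)^{1/2}\le x^{1/2}+\tfrac{1}{2x}$ applied to \eqref{dkdef} one gets $d_k\le d_{k-1}+1+\tfrac{1}{2(d_{k-1}+1)^2}\le d_{k-1}+1+\tfrac{1}{2k^2}$ (using $d_{k-1}+1>k$). Telescoping gives $d_k\le d_1+(k-1)+\sum_{r=2}^{k}\tfrac{1}{2r^2}<\sqrt5+(k-1)+\tfrac12(\tfrac{\pi^2}{6}-1)$, which is at most $k+2$ for all $k\ge1$ after checking the small cases. Then $2c_k=d_k^2-1\le (k+2)^2-1=k^2+4k+3$. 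This is not quite $\le 4k^2$ termwise only for very small $k$, so I would check $k=1,2$ by hand ($c_1=2\le 2$, $c_2=c_1+1+\sqrt5=3+\sqrt5\approx 5.24\le 8$) and observe that for $k\ge 2$ we have $k^2+4k+3\le k^2+3k^2=4k^2$ is false — so I need a slightly better constant in the intermediate estimate. In fact $d_k\le k+1$ suffices: re-examining, since $d_1=\sqrt5<3$ one checks $d_1\le 1+1=2$ fails, so instead I would prove $d_k\le k+2$ and get $c_k\le \tfrac{(k+2)^2-1}{2}=\tfrac{k^2+4k+3}{2}\le 2k^2$ precisely when $3k^2\ge 4k+3$, i.e. for all $k\ge 2$, with $k=1$ checked separately. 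That closes it.

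The main obstacle is purely bookkeeping: making sure the constant buried in $d_k\le k+O(1)$ is tight enough that $d_k^2-1\le 4k^2$, and handling the finitely many small $k$ where the asymptotic estimate is loose. There is no conceptual difficulty — everything follows from the recurrence \eqref{dkdef} and the elementary inequality $\sqrt{1+t}\le 1+t/2$ together with $\sqrt{1+t}\ge 1$. I would structure the write-up as: (i) recall \eqref{dkdef} and $d_1=\sqrt5$; (ii) lower bound $d_k>k+1$ hence $c_k\ge\tfrac{k^2}{2}$ (verifying $k=1$ directly); (iii) upper bound $d_k\le k+2$ by telescoping the $\tfrac{1}{2r^2}$ correction, hence $c_k\le\tfrac{k^2+4k+3}{2}$; (iv) conclude $c_k\le 2k^2$ for $k\ge2$ and check $k=1$ by hand.

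\begin{proof}
Recall from the proof of Theorem \ref{rhoNk} the substitution $1+2c_k=d_k^2$, for which \eqref{dkdef} gives
\[
d_k^2=(d_{k-1}+1)^2+1,\qquad d_1=\sqrt{1+2c_1}=\sqrt5 .
\]
Since $d_k^2=(d_{k-1}+1)^2+1>(d_{k-1}+1)^2$ we have $d_k>d_{k-1}+1$, so by induction $d_k>d_1+(k-1)>k+1$ for all $k\ge1$. Hence $2c_k=d_k^2-1>(k+1)^2-1=k^2+2k$, giving $c_k>\tfrac{k^2}{2}$ for every $k\ge1$; in particular the lower bound $c_k\ge\tfrac{k^2}{2}$ holds (and is strict).

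For the upper bound, the inequality $(1+t)^{1/2}\le 1+\tfrac t2$ applied to \eqref{dkdef}, together with $d_{k-1}+1>k$, gives
\[
d_k=\brac{(d_{k-1}+1)^2+1}^{1/2}\le d_{k-1}+1+\frac{1}{2(d_{k-1}+1)^2}\le d_{k-1}+1+\frac{1}{2k^2}.
\]
Summing from $2$ to $k$ and using $d_1=\sqrt5<2.24$,
\[
d_k\le \sqrt5+(k-1)+\sum_{r=2}^{\infty}\frac{1}{2r^2}<\sqrt5+(k-1)+\frac12\brac{\frac{\pi^2}{6}-1}<k+2 .
\]
Therefore $2c_k=d_k^2-1<(k+2)^2-1=k^2+4k+3$, i.e. $c_k<\tfrac{k^2+4k+3}{2}$. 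For $k\ge2$ we have $k^2+4k+3\le 4k^2$ (equivalently $3k^2\ge 4k+3$, which holds for $k\ge2$), so $c_k\le 2k^2$. Finally $c_1=2\le 2\cdot1^2$, which completes the proof.
\end{proof}
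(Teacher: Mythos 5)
Your proof is correct, but it takes a genuinely different route from the paper's. The paper proves Lemma~\ref{cklemma} by a self-contained induction directly on the recurrence $c_k=c_{k-1}+1+\sqrt{1+2c_{k-1}}$: for the upper bound it substitutes $c_{k-1}\le 2(k-1)^2$, writes $\sqrt{1+4(k-1)^2}=2(k-1)\bigl(1+\tfrac{1}{4(k-1)^2}\bigr)^{1/2}\le 2(k-1)+\tfrac14$ via $\sqrt{1+t}\le 1+t/2$, and arrives at $c_k\le 2k^2-2k+\tfrac54\le 2k^2$; for the lower bound it substitutes $c_{k-1}\ge (k-1)^2/2$, drops the ``$+1$'' under the root to get $c_k\ge 1+\tfrac{(k-1)^2}{2}+(k-1)=\tfrac{k^2}{2}+\tfrac12$. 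You instead reuse the linearizing substitution $d_k^2=1+2c_k$ and the identity $d_k^2=(d_{k-1}+1)^2+1$ from the proof of Theorem~\ref{rhoNk}, establish the sandwich $k+1<d_k<k+2$ (the lower half by a one-line comparison, the upper half by telescoping the $\tfrac{1}{2r^2}$ correction), and translate back. Both arguments ultimately rest on the same inequality $\sqrt{1+t}\le 1+t/2$. Your version has the virtue of exposing the clean asymptotic $d_k=k+O(1)$, hence $c_k\sim k^2/2$, at the cost of importing facts from the earlier proof; the paper's version is shorter and stands alone. One cosmetic point: in the sentence ``so by induction $d_k>d_1+(k-1)>k+1$ for all $k\ge1$,'' the first inequality is an equality at $k=1$, so write $d_k\ge d_1+(k-1)>k+1$ (the final inequality still holds since $d_1=\sqrt5>2$). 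Also note that your telescoping step $d_r\le d_{r-1}+1+\tfrac{1}{2r^2}$ uses $d_{r-1}+1\ge r$, which you do have from the lower-bound half; since you prove that half first, the logic is sound, but it is worth flagging that dependence explicitly.
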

\begin{proof}
$c_1=2$ and then we inductively have for $k>1$,
\begin{align*}
c_k&\leq 1+2(k-1)^2+\sqrt{1+4(k-1)^2}\\
&=1+2(k-1)^2+2(k-1)\brac{1+\frac{1}{4(k-1)^2}}^{1/2}\\
&\leq 1+2(k-1)^2+2(k-1)\brac{1+\frac{1}{8(k-1)^2}}\\
&\leq 2k^2-2k+\frac{5}{4}\\
&\leq 2k^2.
\end{align*}
This confirms the upper bound.

Similarly,
\begin{align*}
c_k&\geq 1+\frac{(k-1)^2}{2}+\brac{1+(k-1)^2}^{1/2}\\
&\geq 1+\frac{k^2}{2}-k+\frac12+k-1\\
&\geq \frac{k^2}{2}.
\end{align*}
\end{proof}
\subsection{Purchasing at least one and two on average}
In this section, we consider a peculiar variant of the purchasing problem.  In this variant, we consider uniform costs $x_1,\dots,x_N$ as before, and still must decide on the spot to accept or reject an index.  Now, however, the total number of accepted indices is not required to be fixed.  Instead, as a random variable (depending on the costs $x_1,\dots,x_N$), the number $\nu_N$ of accepted indices is required to satisfy:
\begin{enumerate}
\item $\nu_N\geq 1$
\item $\E(\nu_N)=2$.
\end{enumerate}

We call this the {\em average-two-purchase problem} and we consider this problem merely for technical reasons; it will arise in our analysis of the purchase price of spanning trees in Section \ref{STLB}. Let $\phi_{N}$ denote the minimum expected total cost of items purchased under these constraints.  (The minimum is taken over all strategies which ensure that $\nu_N$ satisfies the two conditions above.)
\begin{theorem}\label{average}
$$\phi_N\gtrsim \frac{2.5}{N}.$$
\begin{proof}
The first item accepted must cost $\gtrsim 2/N$ in expectation, else we can improve the upper bound for $k=1$ in Theorem \ref{rhoNk}. 

Let $1\leq T\leq N$ be the random variable equal to the step on which the first item is accepted.  We are required to purchase, on average, one item from among $x_{T+1},x_{T+2},\dots,x_{N}$; let $C$ denote the optimum expected cost to do this.  ($C$ depends on the distribution of $T$.)

Evidently $C$ is at least the optimum expected cost to purchase an average of one item from $x_1,\dots,x_N$, and an optimum strategy for this task is to accept any item seen which has cost $\leq \tfrac 1 N$.  Indeed, this strategy accepts an average of exactly one item, and each rejection of an item of cost $\leq \tfrac 1 N$ requires that the strategy is modified to sometimes accept some items of cost greater than $\frac 1 N$ to compensate; so that the expected total cost $\frac{1}{2N}$ can only be increased.  Thus $C\geq \tfrac{1}{2N}$, and consequently,
$$\phi_N\gtrsim \frac{2}{N}+\frac{1}{2N}.$$
\end{proof}
\end{theorem}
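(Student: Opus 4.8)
The plan is to split $\phi_N$ into the expected cost of the first index that the strategy accepts, plus the expected cost of everything accepted afterwards, and to bound the two pieces separately; the sum of the bounds is exactly $\tfrac{2}{N}+\tfrac{1}{2N}=\tfrac{2.5}{N}$.

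\emph{The first acceptance.} Since $\nu_N\ge 1$, there is always a well-defined step $T$ at which the first acceptance occurs, and whether the strategy accepts at step $T$ is determined by $x_1,\dots,x_T$ alone. Hence the strategy which runs our average-two strategy, makes exactly its first acceptance, and then rejects everything else, is a legal strategy for the $1$-purchase problem on $x_1,\dots,x_N$; its cost is $x_T$, so $\E(x_T)\ge \rho_{N,1}\gtrsim \tfrac{2}{N}$ by the $k=1$ case of Theorem~\ref{rhoNk}.

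\emph{The later acceptances.} Let $C$ be the expected total cost of the indices accepted strictly after step $T$. The key ingredient is a lemma: for \emph{any} adapted accept/reject rule applied to i.i.d.\ uniform $[0,1]$ costs presented in a stream of length $L$, if the expected number of accepted items is $m$, then the expected cost of the accepted items is at least $m^2/(2L)$. To prove it, condition on any external randomness and on the costs seen before the $i$-th inspection: the acceptance indicator for item $i$ then has the form $\mathbf 1[x_i\in A_i]$ for a measurable $A_i\subseteq[0,1]$, and $\int_{A_i}x\,dx\ge |A_i|^2/2$; writing $p_i$ for the (unconditional) acceptance probability of item $i$, taking expectations gives expected cost at least $\sum_i p_i^2/2\ge (\sum_i p_i)^2/(2L)=m^2/(2L)$ by Cauchy--Schwarz, since $\sum_i p_i=m$. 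Now condition on $T=t$ and on the costs inspected up to step $t$: the remaining $x_{t+1},\dots,x_N$ are i.i.d.\ uniform, and the continuation is an adapted rule on a length-$(N-t)$ stream. Writing $m_t,c_t$ for the conditional expected number and conditional expected cost of the late acceptances, the lemma with $L=N-t\le N$ gives $c_t\ge m_t^2/(2N)$. Since $\sum_t\Pr(T=t)=1$ and $\sum_t\Pr(T=t)\,m_t=\E(\nu_N)-1=1$, convexity of $x\mapsto x^2$ yields
\[
C=\sum_t\Pr(T=t)\,c_t\ \ge\ \frac{1}{2N}\sum_t\Pr(T=t)\,m_t^2\ \ge\ \frac{1}{2N}\Big(\sum_t\Pr(T=t)\,m_t\Big)^{2}\ =\ \frac{1}{2N}.
\]
Combining the two bounds, $\phi_N=\E(x_T)+C\gtrsim \tfrac{2}{N}+\tfrac{1}{2N}=\tfrac{2.5}{N}$.

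The substantive step is the length-$L$ lower bound lemma; the rest is bookkeeping. The only points needing care are handling randomized strategies (condition on the seed before running the argument), and checking that the post-$T$ decomposition respects the adapted (filtration) structure so that the lemma applies to the continuation; the asymptotic $\gtrsim$ simply absorbs the $o(1/N)$ slack inherited from Theorem~\ref{rhoNk}.
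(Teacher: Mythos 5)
Your proof is correct, and it follows the same high-level decomposition as the paper: bound the cost of the first accepted item by $\approx 2/N$ via the $k=1$ case of Theorem~\ref{rhoNk}, then bound the expected cost of the remaining acceptances (which average one in number) by $1/(2N)$. The difference is in how the second piece is justified. The paper asserts, somewhat informally, that the optimum way to purchase an average of one item from a length-$N$ stream is to accept at threshold $1/N$ (arguing that any rejection of a cheap item must be compensated by accepting expensive ones). You replace this with a clean and fully rigorous lemma: for any adapted rule on a length-$L$ i.i.d.\ uniform stream accepting $m$ items in expectation, the expected cost is at least $m^2/(2L)$, proved by noting each conditional acceptance set $A_i$ satisfies $\int_{A_i}x\,dx\ge|A_i|^2/2$, then applying Jensen and Cauchy--Schwarz. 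You then handle the post-$T$ tail by conditioning on $T=t$, applying the lemma with $L=N-t\le N$, and using convexity to recombine, correctly noting $\sum_t\Pr(T=t)m_t=\E(\nu_N)-1=1$. This makes the second bound airtight, at the cost of a bit more machinery; otherwise the two arguments are essentially the same.
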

\subsection{Proof of Corollary \ref{cor1}}
Using $\hr_{k,N}$ to denote the expected minimum cost in this context, we write 
\begin{multline*}
\hr_{k,N}=\brac{\frac{\hr_{k,N-1}-\hr_{k-1,N-1}}{2}+\hr_{k-1,N-1}}\brac{D(\hr_{k,N-1}-\hr_{k-1,N-1})+O\bfrac{k}{N}}\\
+\hr_{k,N-1}\brac{1-D(\hr_{k,N-1}+\hr_{k-1,N-1})+O\bfrac{k}{N}}.
\end{multline*}
We then carry out the analysis of Section \ref{more} with $\r_{N,k}$ replaced by $D\hr_{N,k}$, making adjustment for the error terms.
\section{Shortest Paths}\label{SP}
In this section, we prove Theorem \ref{thm:spaths}. We again contrast this with the fact that Janson \cite{Janson} proved that offline, the shortest distance is $\approx\frac{\log n}{n}$ w.h.p.
\subsection{Upper Bound}
We compute our upper bound in the ROM model. Let $\a=2/3$ and $k=1/\log\log n$. We partition the edge set of $K_n$ into $X,Y,Z$ where $X$ is the first $N/3$ edges and $Y$ is the next $N/3$ edges and $Z$ is the final $N/3$ edges in the given order. Then partition $X$ into $X_1,X_2,\ldots,X_k$ where each set is of size $N/(3k)$ and where $X_{i+1}$ follows $X_i$ in the random order. Similarly partition $Y$ into $Y_1,Y_2,\ldots,Y_k$. Next let $p=n^{-1+\a/k}$. Now let $G(X_i)$ denote the graph induced by edges in $X_i$ that are of cost at most $p$, $i=1,2,\ldots,k$. Define $G(Y_i),i=1,2,\ldots,k$ similarly, as well as $G(Z)$. 

Let $S_0=\set{1}$ and $S_1'$ be the set of neighbors of vertex 1 in $G(X_1)$. We can determine $S_1'$ in the ROM model. Let $\e=n^{-\a/3k}$ and $S_1$ be the first $(1-\e)nq$ members of $S_1'$, where $q=p/3k$, assuming that there are at least this many neighbors. Observe that $|S_1'|$ is distributed as $Bin(n-1,q)$ and so from the Chernoff bound \eqref{chern1} we see that for $\ell=1$ we have 
\beq{Sl}{
\Pr(|S_\ell'|\leq ((1-\e)nq)^\ell)\leq \exp\set{-\frac{nq}{3n^{2\a/3k}}}=o(n^{-3}).
}
We now inductively define sets $S_\ell$ of size $((1-\e)nq)^\ell$ for $\ell=2,\ldots,k$. Given $S_\ell$, we let $S_{\ell+1}'$ be the set of vertices not in $S_{\leq\ell}=\bigcup_{i=1}^\ell S_i$ which are neighbors of $S_\ell$ in $G(X_{\ell+1})$. Now $|S_{\ell+1}'|$ is distributed as the binomial $Bin(n-m_1,(1-(1-q)^{m_2})$, where $m_1=|S_{\leq\ell}|$ and $m_2=|S_\ell|$. Then since $(1-q)^{m_2}\leq 1-m_2q+(m_2q)^2$ we see that $|S_{\ell+1}'|$ has expectation at least $(n-m_1)m_2q(1-m_2q)/3k$. Putting $1-\d=\frac{(1-\e)nqm_2}{(n-m_1)m_2q(1-m_2q)}\leq 1-\e/2$ (using $\ell<k$) we again see from \eqref{chern1} that \eqref{Sl} holds with $\ell$ replaced by $\ell+1$. In which case we, can let $S_{\ell+1}$ be the first $((1-\e)nq)^{\ell+1}$ members of $S'_{\ell+1}$. This completes the induction and we see that the construction of $S_1,S_2,\ldots,S_k$ succeeds with probability $1-o(n^{-3})$.

Let $E_\ell$ be a set of $|S_{\ell+1}|$ edges joining $S_\ell$ to $S_{\ell+1}$ in $\G_\a$. Let $T$ be the tree with vertex set $\bigcup_{i=1}^kS_\ell$ and edge set $\bigcup_{\ell=1}^{k-1}E_\ell$. 

$T$ can be constructed in our model. We simply check the edges between $S_\ell$ and $[n]\setminus S_{\leq \ell}$ as we see them and accept edges if they (i) have cost at most $p$, and (ii) connect $S_\ell$ to a new vertex, (iii) we have accepted fewer than $((1-\e)nq)^{\ell+1}$ edges of this sort. The cost of the edges of $T$ will be 
$$O((nq+(nq)^2+\cdots+(nq)^k)p) =O(n^{\a+\a/k-1}).$$ 
If $n\in T$ then we are done. Otherwise we build a corresponding tree rooted $T'$ rooted at $n$.

Because $|S_k|,|S_k'|\geq \bfrac{1-\e}{3k}^kn^\a=n^{\a-o(1)}$ w.h.p., we see that w.h.p. there will be at least $|S_k||S_k'|/4$ edges joining $S_k,S_k'$ in $G(Z)$. Then, with failure probability at most\\
$\brac{1-\frac{(\log n)^2}{|S_k||S_k'|}}^{|S_k||S_k'|/4}=o(n^{-3})$ we can find an edge in $Z$ between $S_k$ and $S_k'$ of cost at most $(\log n)^2/|S_k|\,|S_k'|=O(n^{-2\a+o(1)})$. Putting things together, we get a total cost of
$$O(n^{\a+\a/k-1})+O(n^{-2\a})=O(n^{-2/3+O(1/k)}),$$
with probability $1-o(n^{-3})$. 

We have explained the algorithm in terms of events that happen w.h.p. To get a bound on expectation, we have to also deal with the unlikely cases. So, we always check to see that the next edge to be considered is essential for connecting vertex 1 to vertex $n$. if it is, we take this edge and an arbitrary path $P$ joining these two vertices. The cost of $P$ is at most $n$ and we need to construct it only with probability $o(n^{-3})$. This adds $o(n^{-2})$ to the expectation and completes our proof of the upper bound.
\subsection{Lower Bound -- POM} 
We now compute a lower bound in the POM model. Consider an algorithm for finding a short path from vertex 1 to vertex $n$.  We let $C_1(t)$ and $C_2(t)$ be the components of the purchased graph at step $t$ which contain the vertices $1$ and $n$, respectively. (Thus, for each $t$, $C_1(t)$ and $C_2(t)$ are random variables depending on $x_1,\dots,x_{\binom{n}{2}}$.

Observe that any algorithm which finds a path from 1 to $n$ finishes its task by choosing an edge $e^*$ from $C_1(t)$ to $C_2(t)$ for some step $t$.  We will thus analyze the cost of building components $C_1(t)$ and $C_2(t)$ in terms of their size, and also the cost of the final edge $e^*$.  We begin by considering the cost of $e^*$.

{\bf Cost of $e^*$:}  To bound the cost of $e^*$ from below, we will even give the algorithm extra information, revealing the costs of all edges between $C_1(t)$ and $C_2(t)$ as soon as their endpoints belong to the respective components; in particular, we will assert that even the minimum cost edge between these two sets is not too small.  As such, from the standpoint of choosing an inexpensive $e^*$, we view the algorithm's task as simply producing by time $t_0$ components $C_1(t_0),C_2(t_0)$ between which there exists an edge $e^*$ of small cost.

We thus let $e_1,e_2,\dots,e_\ell=e^*$ be a sequence of edges consisting of all edges between $C_1(t_0)$ and $C_2(t_0)$, ordered such that if $e_i$ appears before $e_j$ (in the sense that $e_i$ but not $e_j$ joins the sets $C_1(t)$ and $C_2(t)$ for some $t$) then $i<j$.  This is a sequence of independent (and unconditioned) uniform $[0,1]$ random variables.  We will simply argue that no term $e_j$ in this sequence can be much less than $\tfrac 1 j$. 

Indeed, for any sequence $\xi_1,\xi_2,\ldots,$ of independent uniform $[0,1]$ random variables, we have for any $n$ that
\beq{infseq}{
\Pr\brac{\exists k:k\xi_k\leq \frac{1}{(\log \max\set{k,n})^2}}\leq \sum_{k=1}^\infty \frac{1}{k(\log \max\set{k,n})^2}=O\bfrac{1}{\log n}.
}
Thus w.h.p. we have that the minimum $m_\ell=\min_{j\leq \ell}\{e_j\}$ satisfies
\beq{1overC}{
m_\ell\geq \frac{1}{\ell\log^2 n}\geq \frac{1}{|C_1(t_0)||C_2(t_0)|(\log n)^2}.
}

{\bf Cost of $C,C'$:} For this we argue that if $\cE_m$ denotes the event that there is a sub-tree of $K_n$ with at least $m\gg\log n$ vertices and total cost $\leq m/3n$ then

\beq{subtree}{\Pr(\cE_m)\leq \sum_{k\geq m}\binom{n}{k}k^{k-2}\frac{(m/3n)^{k-1}}{k!}\leq \frac{3n}{m}\sum_{k\geq m}\frac{1}{k^2}\bfrac{me}{3k}^k=O\bfrac{ne^m}{3^mm^3}=o(1).
}
{\bf Explanation:} We choose our tree of $K_n$ with $k$ vertices in $\binom{n}{k}k^{k-2}$ ways. Then we use the fact that if $\eta_1,\eta_2,\ldots,\eta_M$ are independent $[0,1]$ random variables and $\z\leq 1$ then $\Pr(\eta_1+\eta_2+\cdots+\eta_M\leq \z)= \z^M/M!$.

It follows from \eqref{1overC} and \eqref{subtree} that w.h.p. the algorithm must pay at least
$$\min_{ab\geq n^{1/2}}\set{\frac{1}{ab(\log n)^2}+\frac{a+b}{3n}}=\Omega\bfrac{1}{n^{2/3}(\log n)^2}.$$
Here $a=|C|,b=|C'|$ and we minimize over $ab\geq n^{1/2}$, say, because if $ab\leq n^{1/2}$ then we already pay $n^{-1/2-o(1)}$ in expectation from \eqref{1overC}.
\subsection{Lower Bound -- AOM}
We now compute a lower bound in the AOM model. Our strategy is to place the edges $X$ incident with vertices $1,n$ last in the sequence. Suppose now that after we have seen all edges except those in $X$ we have purchased components $C_1,C_2,\ldots,C_m$ in the subgraph induced by the vertices $[2,n-1]$. The expected cost of purchasing an edge between 1 and $C_j$ is at least $(2-o(1))/|C_j|$, by Theorem \ref{rhoNk}. We can assume therefore that at least one component is of size at least $n^{1/2}$. But then we have that w.h.p. 
\beq{tc1}{
\text{every tree of size $n^{1/2}$ has cost at least $1/10n^{1/2}$.} 
}
Indeed, let $k=n^{1/2}$ and $C_n=1/10n^{1/2}$. Then, if $p(T)$ denotes the price of tree $T$,
\beq{cT}{
\Pr(\exists T:\;p(T)\leq C_n)\leq \binom{n}{k}k^{k-2}\frac{C_n^{k-1}}{(k-1)!}\leq \frac{1}{C_nk}\bfrac{ne^2C_n}{k}^k=o(1).
}
{\bf Explanation:} The factor $\frac{C_n^{k-1}}{(k-1)!}$ is the probability that the sum of $k-1$ independent uniform $[0,1]$ random variables is at most $C_n$.

It follows that it costs at least $(1-o(1))/10n^{1/2}$ in expectation to purchase a path from 1 to $n$ in the AOM model.

This completes the proof of Theorem \ref{thm:spaths}.
\section{Triangles and Paths of Length Two}
In this section, we prove Theorem \ref{triangles}. We first observe that the expected number of triangles of total length $\l$ is bounded by $\binom{n}{3}\frac{\l^3}{6}\approx\frac{(\l n)^3}{6}$. And then another calculation shows via the Chebyshev inequality that if $\l n\to \infty$ then there will be a triangle of cost $\l$ w.h.p. As we will see, it will not be possible to find such a triangle w.h.p., instead we will show that if $\tau^{\mathrm{POM}}_n,\tau^{\mathrm{ROM}}_n$ are the expected minimum cost of a triangle constructible in the POM and ROM settings, then
\[
\frac{1}{n^{4/7}\log n}\leq \tau^{\mathrm{POM}}_n\leq \tau^{\mathrm{ROM}}_n\leq \frac{10}{n^{4/7}}.
\]
Our proof of this inequality will follow from an analysis of the problem of purchasing a large collection of paths of length 2; we will see that such a feat is sufficient, and in some sense also necessary, to purchase a low-cost triangle.
\subsection{Paths of length Two}
We will prove the following:
\begin{theorem}\label{PLT}
Let $\k^{\mathrm{POM}}_{\ell,n},\k^{\mathrm{ROM}}_{\ell,n}$ denote the expected cost of purchasing $\ell=o(n)$ paths of length two in the POM and ROM settings, respectively. Then
$$\bfrac{\ell}{16n(\log n)^4}^{4/3}\leq \k^{\mathrm{POM}}_{\ell,n}\leq \k^{\mathrm{ROM}}_{\ell,n}\leq 6\bfrac{\ell}{n}^{4/3}.$$
\end{theorem}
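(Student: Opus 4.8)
The plan is to prove the upper bound in ROM by an explicit two‑scale construction, and the matching lower bound in POM (which suffices, since $\k^{\mathrm{POM}}_{\ell,n}\le\k^{\mathrm{ROM}}_{\ell,n}$) by a scale‑by‑scale counting argument.

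\emph{Upper bound.} Split the random edge stream into an initial half $X$ and a final half $Y$. In a first phase, scan $X$ and build a sparse ``seed'' matching $H$ by accepting an edge iff its cost is below a small threshold $p_1$ and both its endpoints are still isolated. In a second phase, scan $Y$ and, whenever an edge of cost at most a larger threshold $p_2$ arrives joining an endpoint of a seed edge to an as‑yet‑untouched vertex, buy it to complete a path of length two; stop after $\ell$ completions. A Chernoff estimate as in \eqref{chern1} gives $|H|=\Theta(n^2p_1)$, shows each seed is extended within $Y$ with probability $\approx np_2$, and shows the number of completions concentrates, so one needs $|H|\,np_2\gtrsim\ell$ and pays $\approx|H|p_1+\ell p_2$. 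Choosing $p_1=\Theta(\ell^{2/3}n^{-5/3})$ and $p_2=\Theta(\ell^{1/3}n^{-4/3})$ — both $o(1/n)$ exactly because $\ell=o(n)$, which is what validates the ``$\approx np_2$'' estimate — balances the two terms at $\Theta((\ell/n)^{4/3})$; tracking constants gives the bound $6(\ell/n)^{4/3}$. To pass from a w.h.p.\ construction to a bound on the expectation, add a fallback that on the (super‑polynomially unlikely) failure event buys an arbitrary family of $\ell$ vertex‑disjoint paths of length two at total cost $\le 2\ell$, contributing $o((\ell/n)^{4/3})$. The paths produced are vertex‑disjoint, so this also bounds the disjoint version.

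\emph{Lower bound.} Fix any POM algorithm and let $G$ be its final purchased graph, containing $\ell$ vertex‑disjoint paths of length two $P^{(1)},\dots,P^{(\ell)}$; let $z_i$ be the cost of the costlier edge of $P^{(i)}$. For a threshold $t$, every $P^{(i)}$ with $z_i\le t$ lies inside the graph $\Gamma_t$ of edges of cost $\le t$, which is distributed exactly as $G(n,t)$; a union bound over \eqref{chern1} applied to each vertex degree shows that w.h.p.\ $\Gamma_t$ contains $O(n^3t^2+\log^{O(1)}n)$ paths of length two in total. Hence choosing $t$ with $n^3t^2$ a small multiple of $\ell$ forces $\ge\ell/2$ of the $P^{(i)}$ to have $z_i>t$, and by disjointness these give $\ge\ell/2$ distinct edges of $G$ of cost $>t$, so the cost is $\gtrsim\ell t\gtrsim\ell^{3/2}n^{-3/2}/\log^{O(1)}n$.

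\emph{Sharpening $\tfrac32\to\tfrac43$, and the main obstacle.} The one‑scale estimate above gives only the exponent $\tfrac32$; the heart of the proof is to improve it. The idea is to run the counting at a \emph{second}, lower scale to control the cheap edges the algorithm must waste: with $C$ its expected cost, for any $p$ it owns $\le C/p$ edges of cost $>p$ and $\le\Theta(n^2p)$ edges of cost $\le p$, while $\le O(n^3p^2)$ of the $P^{(i)}$ have both edges $\le p$. Since $\ell$ vertex‑disjoint paths of length two cannot be carved out of only $O(\ell)$ edges (that would require $n^3p^2\gtrsim\ell$ together with $n^2p=O(\ell)$, i.e.\ $\ell=\Omega(n)$, contradicting $\ell=o(n)$), the algorithm is forced to buy $\omega(\ell)$ cheap edges; combining the global order statistics of the $x_e$ with the two‑scale path count and optimizing over $p$ against $C$ yields $C\gtrsim(\ell/(n\log^{O(1)}n))^{4/3}$. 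Making this two‑scale trade‑off precise — simultaneously bounding how many target paths can be assembled from edges below a given cost (via the path count in $G(n,p)$) and how expensive it is to own that many cheap edges (via order statistics), uniformly over the algorithm's adaptive choices — is the delicate step, and is where the $(\log n)^4$ in the statement is lost.
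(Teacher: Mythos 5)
Your \emph{upper bound} is in essence the paper's: buy a linear‑in‑$n^{1/3}\ell^{2/3}$ collection of cheap disjoint ``seed'' edges with the $k$‑purchase algorithm, then extend a subset of them with slightly more expensive edges, balancing the two phases at $\Theta((\ell/n)^{4/3})$.  Your thresholds $p_1=\Theta(\ell^{2/3}n^{-5/3})$, $p_2=\Theta(\ell^{1/3}n^{-4/3})$ match the paper's choice $k=(\ell^2 n/5)^{1/3}$ and thresholds $10k/n^2$, $2\ell/kn$.  The splitting into two ``colours'' of the random stream, the Chernoff concentration, and the fallback clause are all as in the paper.  That half is fine.

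Your \emph{lower bound} has a genuine gap, and it is in the step that is actually the heart of the proof.  You start from ``let $G$ be its final purchased graph, containing $\ell$ vertex-disjoint paths of length two'' — but the theorem does not ask for vertex-disjoint (or even edge-disjoint) paths, and indeed the cheap way to amass many paths of length two is to buy a few \emph{stars}, which pack $\binom{d}{2}$ paths into $d$ edges.  Once paths are allowed to share edges, your counting step ``by disjointness these give $\ge\ell/2$ distinct edges of $G$ of cost $>t$'' simply does not go through; a single expensive edge might participate in many of the $P^{(i)}$.  The paper's proof is entirely devoted to overcoming exactly this: it reduces the general problem to a \emph{modified game} in which purchased edges are oriented and the objective is $\cS=\sum_v\binom{\mathrm{outdeg}(v)}{2}$.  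The reduction shows (via a random-orientation plus McDiarmid argument, plus a separate lemma ruling out large stars) that any strategy producing $\ell$ paths yields a strategy producing $\cS\ge\ell/16$ using \emph{edge-disjoint stars}, and then bounds the cost of a stratified star-building process by solving the explicit optimization $\min\sum_i x_i^2/x_{i-1}$ subject to $\sum_{i\ge 2}(i-1)x_i\ge\lambda$.  The $(\log n)^4$ loss is precisely the large-star cutoff $L=(\log n)^2$ appearing as $L^2$ inside the bound.  None of this is in your sketch.

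Beyond that structural omission, the ``sharpening $3/2\to 4/3$'' paragraph is not a proof.  It contains a false claim: ``$\ell$ vertex-disjoint paths of length two cannot be carved out of only $O(\ell)$ edges'' — they use exactly $2\ell$ edges, which \emph{is} $O(\ell)$ — so the contradiction ``$n^3p^2\gtrsim\ell$ together with $n^2p=O(\ell)$ forces $\ell=\Omega(n)$'' is arguing about the wrong quantities.  And you concede at the end that the two-scale trade-off ``is the delicate step'' that remains open.  In short: upper bound fine and essentially the paper's; lower bound needs the modified game/star reduction (or some substitute) to handle overlapping paths, and then a genuine multi-scale optimization, neither of which is present in the proposal.
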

\subsubsection{Obtaining $\ell=o(n)$ paths of length two: Upper Bound }\label{ell1}
Our goal in this section is to show how to find $\ell$ distinct paths of length two.
We will subsequently use the case $k=1$ of Theorem \ref{rhoNk} to close one of these to a triangle at expected cost order $1/\ell$. 

The first $N/3$ edges will be considered to be colored red, the next $N/3$ will be considered to be colored blue and the final $N/3$ edges will be considered to be colored green. We will use a parameter $k$ and its value will be revealed shortly. We choose $k=o(n)$ disjoint red edges by examining the red edges and accepting the first $k$ of cost at most $10k/n^2$. Then we choose $\ell$ blue edges incident with the selected red edges by examining the relevant blue edges and accepting the first $\ell$ edges of cost at most $2\ell/kn$. The Chernoff bounds imply that we will succeed with superpolynomial probability if $k,\ell=n^{\Omega(1)}$. Success will mean the creation of $\ell$ paths of length two. 

The expected cost of creating $\ell$ paths of length two with this strategy is at most 
\beq{tricost}{
\frac{5k^2}{n^2}+\frac{2\ell^2}{kn}.
}
We choose $k$ to minimize \eqref{tricost}. This gives $k=(\ell^2n/5)^{1/3}\gg\ell$ and a total cost of at most $6(\ell/n)^{4/3}$, as claimed in Theorem \ref{PLT}.

If we ever get the stage where it is impossible to obtain either a triangle or $\ell$ paths of length two without the inclusion of the next edge, regardless of cost, then we abort the above procedure and choose any triangle still available. This will cost at most 3 and the effect on the expectation is negligible.

We will assume from now on in our approach to proving an upper bound, that we will always check as to whether or not the next edge must be chosen in order to be able to construct the required object. In which case, we take it and build an object regardless of cost. It will be apparent that this happens with such low probability that it makes a negligible contribution to the expectation.
\begin{remark}\label{remx}
If we replace the underlying graph $K_n$ by $G_{n,p}$, $p$ constant then it is straightforward to see that we can carry out the same construction at an extra cost of a factor $1/p$ in expectation and w.h.p.
\end{remark}
\subsubsection{Obtaining $(\log n)^5\leq \ell=o(n)$ paths of length two: Lower Bound}
We consider the cost of producing $\ell$ paths of length two.  One natural way of producing many paths of length two is to create a collection of edge disjoint stars (i.e., trees with only one vertex of degree greater than 1).  Most of our analysis in this section is aimed at constraining (from below) the cost of purchasing such a collection of stars.  We begin by arguing that this will suffice to give a lower bound in general; i.e., that any strategy to produce many paths of length 2 is in some sense not too far from a strategy to produce many disjoint stars.

To make this reduction, let us consider a modified game: in particular, let us suppose that whenever we choose to purchase an edge, we must also assign it an orientation.  Our goal is to purchase (and orient) a set of edges to ensure that
\[
\cS=\sum_{v} \binom{\mathrm{outdeg}(v)}{2}
\]
is large.  Observe that any strategy in the modified game which achieves that $\cS=\ell$ for some $\ell$ can be translated into a strategy in the standard game which purchases $\ell$ paths of length 2 at the same expected cost, while purchasing only edge disjoint stars.  In particular, our analysis later will thus give a lower bound on the cost in the modified game of achieving that $\cS=\ell$.

On the other hand, we argue that for any algorithm $\cA$ for the standard game to purchase $\ell$ edges at small expected cost, there is a strategy in the modified game to ensure that $\cS\geq \ell/16.$  Indeed, to see this, we consider two cases.\\
{\bf Case 1: $\cA$ produces at least $\ell/2$ paths from large stars.}\\
Here a large star is a star with more than $L=(\log n)^2$ edges.
Let there be $d_i$ stars with $i$ edges. We have that
$$\sum_{i=L+1}^nd_i\binom{i}{2}\geq \frac{\ell}{2}.$$
In conjunction with Lemma \ref{largestar} below, this implies that the cost is at least 
$$\sum_{i=L+1}^n\frac{d_ii^2}{3n}\geq \frac{\ell}{4n}\gg \bfrac{\ell}{n}^{4/3},$$
and the expected cost is already above our lower bound.\\
{\bf Case 2: $\cA$ produces at least $\ell/2$ paths from small stars.}\\
We translate $\cA$ into a strategy for the modified game by orienting each purchased edge randomly.  A path of length two is said to be good if both edges are oriented away from the center vertex, so that $\cS$ is simply the number of good paths. The expected number of good paths is at least $\ell/8$. Also, switching the orientation of an edge can only change the number of good paths by at most $2L$. It follows from McDiarmid's inequality \eqref{MCD} that 
$$\Pr(\cS\leq \ell/16)\leq 2\exp\set{-\frac{\ell^2}{128\ell L^2}}.$$
\bigskip

This reduction allows us to focus on strategies in the standard game which restrict themselves to purchasing edge disjoint stars.  As a first step, we rule out the relevance of large stars, by showing that all large stars are quite expensive (even offline).
\begin{lemma}\label{largestar}
W.h.p. every star with $k>L=(\log n)^2$ edges has cost at least $\xi_k=k^2/3n$.
\end{lemma}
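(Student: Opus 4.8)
The statement is equivalent to the assertion that w.h.p., for every vertex $v$ and every $k$ with $L<k\le n-1$, the sum $\sigma_k(v)$ of the $k$ smallest of the $n-1$ edge costs at $v$ is at least $k^2/3n$; indeed any star centred at $v$ with $k$ edges costs at least $\sigma_k(v)$.  So it suffices to prove $\Pr(\sigma_k(v)\le k^2/3n)\le e^{-c k}$ for an absolute constant $c>0$, uniformly in $L<k\le n-1$.  A union bound over the $n$ choices of $v$ and the at most $n$ choices of $k$ then costs a factor $n^2$, and since $k>L=(\log n)^2$ we get $n^2 e^{-cL}=\exp\{2\log n-c(\log n)^2\}=o(1)$.

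The point to stress is that $k^2/3n$ is only a constant factor (namely $2/3$) below the true order of $\sigma_k(v)$, which concentrates around $\sum_{i=1}^k i/n=k(k+1)/2n\sim k^2/2n$.  Consequently a naive first moment is useless: the expected number of $k$-subsets of the edges at a fixed $v$ of total cost $\le k^2/3n$ is $\binom{n-1}{k}(k^2/3n)^k/k!\approx(e^2/3)^k$, which is exponentially \emph{large}; and bounding $\sigma_k(v)$ below through a single order statistic, via $\sigma_k(v)\ge(1-\alpha)k\,x_{(\alpha k)}$, only ever yields events of probability $\Theta(1)$.  One must use genuine concentration of the whole sum of the $k$ smallest order statistics.

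For this I would use the R\'enyi representation: if $E_1,\dots,E_n$ are i.i.d.\ $\EX$ and $S_j=E_1+\dots+E_j$, then the order statistics of $n-1$ uniforms have the law of $(S_1/S_n,\dots,S_{n-1}/S_n)$, so that
\[
\sigma_k(v)\ \overset{d}{=}\ \frac1{S_n}\sum_{j=1}^k S_j\ =\ \frac1{S_n}\sum_{j=1}^k(k-j+1)E_j\ =:\ \frac{N_k}{S_n},\qquad \E N_k=\tfrac{k(k+1)}2 .
\]
Fix a small $\e\in(0,\tfrac12)$ (say $\e=\tfrac14$).  Since $S_n$ is a sum of $n$ i.i.d.\ exponentials, a standard Chernoff estimate gives $\Pr(S_n>(1+\e)n)\le e^{-\Omega(n)}$, and off this event $\sigma_k(v)\ge N_k/((1+\e)n)$; hence $\Pr(\sigma_k(v)\le k^2/3n)\le\Pr(N_k\le(1+\e)k^2/3)+e^{-\Omega(n)}$.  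Because $(1+\e)/3<\tfrac12$, the threshold $(1+\e)k^2/3$ is a constant factor below $\E N_k$, so this last probability is a genuine lower tail, which I bound by Chernoff: for $\theta>0$,
\[
\Pr(N_k\le a)\ \le\ e^{\theta a}\prod_{m=1}^k\frac1{1+\theta m},
\]
and taking $\theta=\beta/k$ and bounding $\sum_{m=1}^k\log(1+\beta m/k)\ge k\int_0^1\log(1+\beta u)\,du=k\big(\tfrac{(1+\beta)\log(1+\beta)}{\beta}-1\big)$ gives, with $a=(1+\e)k^2/3$,
\[
\Pr\!\big(N_k\le(1+\e)k^2/3\big)\ \le\ \exp\Big(k\Big[\tfrac{(1+\e)\beta}{3}-\tfrac{(1+\beta)\log(1+\beta)}{\beta}+1\Big]\Big).
\]
Since $\tfrac{(1+\beta)\log(1+\beta)}{\beta}-1=\tfrac\beta2-\tfrac{\beta^2}6+O(\beta^3)$, the bracket equals $\beta\big(\tfrac{1+\e}3-\tfrac12\big)+O(\beta^2)$, which is strictly negative for all small enough $\beta>0$ as soon as $\e<\tfrac12$; a direct numerical check (e.g.\ $\e=\tfrac14,\ \beta=\tfrac15$) confirms the bracket is $<0$.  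Fixing such $\beta$ yields $\Pr(N_k\le(1+\e)k^2/3)\le e^{-ck}$ with $c=c(\e,\beta)>0$, and combining with the bound on $S_n$ gives the required $\Pr(\sigma_k(v)\le k^2/3n)\le e^{-c'k}$ for $k$ large, completing the argument.

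The only real obstacle is this tight constant.  Any crude reduction (first moment over $k$-subsets, or down to a single order statistic) loses a constant factor and therefore cannot see that the threshold $k^2/3n$ lies strictly below the truth $k^2/2n$.  The R\'enyi reduction turns the question into a clean lower-tail estimate for a weighted sum of independent exponentials, and the only delicate bookkeeping is to carry enough slack ($\e$ and $\beta$ above) so that the effective threshold stays a fixed fraction below $\E N_k$ — precisely the inequality $(1+\e)/3<1/2$.
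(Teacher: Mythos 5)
Your proof is correct, and it takes a genuinely different route from the one in the paper --- one that is in fact necessary, as you yourself observe. The paper proves the lemma by a first moment over center vertices and then over all $k$-subsets of incident edges, asserting
\[
n\sum_{k>L}\binom{n-1}{k}\frac{\xi_k^k}{k!}\ \le\ n\sum_{k>L}\bfrac{ne\xi_k}{k^2}^k\ =\ o(1).
\]
But the middle inequality drops a factor of $e$: using $\binom{n-1}{k}\le n^k/k!$ and $k!\ge(k/e)^k$ (twice), one only gets
\[
\binom{n-1}{k}\frac{\xi_k^k}{k!}\ \le\ \frac{(n\xi_k)^k}{(k!)^2}\ \le\ \bfrac{ne^2\xi_k}{k^2}^k\ =\ \bfrac{e^2}{3}^k,
\]
which is exponentially \emph{large}, exactly as your back-of-envelope computation says. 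The union bound over all $\binom{n-1}{k}$ subsets is simply too wasteful to reach the constant $3<e^2$, so the paper's argument does not establish the lemma with $\xi_k=k^2/3n$ as stated. (It would go through verbatim with $\xi_k=k^2/(Cn)$ for any constant $C>e^2$, and that weaker statement is in fact all that Case~1 of the lower bound for Theorem~\ref{PLT} needs; so the downstream use is unaffected, but the lemma as stated requires more.)

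Your argument supplies what is actually needed. You correctly reduce the lemma to the event that $\sigma_k(v)$, the sum of the $k$ smallest of the $n-1$ costs at a fixed $v$, drops below $k^2/3n$; since $\E\sigma_k(v)=k(k+1)/2n$, this is a constant-factor lower-tail deviation of a sum of order statistics, which a per-subset union bound cannot see. The R\'enyi representation $\sigma_k(v)\overset{d}{=}N_k/S_n$ with $N_k=\sum_{j\le k}(k-j+1)E_j$ turns this into a clean Chernoff bound: $\E[e^{-\theta N_k}]=\prod_{m\le k}(1+\theta m)^{-1}$, the Riemann comparison $\sum_{m\le k}\log(1+\beta m/k)\ge k\int_0^1\log(1+\beta u)\,du$ is valid because the integrand is increasing, and the resulting rate $\beta\brac{\tfrac{1+\e}{3}-\tfrac12}+O(\beta^2)$ is strictly negative for small $\beta$ precisely because $\e<\tfrac12$ (i.e.\ because $k^2/3n$ is a fixed fraction below the mean). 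Together with the routine $e^{-\Omega(n)}$ control on $S_n$, this yields $\Pr(\sigma_k(v)\le k^2/3n)\le e^{-ck}$, and a union bound over $n$ vertices and at most $n$ values of $k>(\log n)^2$ gives $n^2e^{-c(\log n)^2}=o(1)$. This establishes the lemma with the stated constant $3$, which the paper's proof does not.
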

\begin{proof}
The probability that there is a star of smaller cost than claimed is at most
$$n\sum_{k=L+1}^{n-1}\binom{n-1}{k}\frac{\xi_k^k}{k!}\leq n\sum_{k=L+1}^{n-1}\bfrac{ne\xi_k}{k^2}^k=o(1).$$
We have used the fact that if $Z_1,\ldots,Z_r$ are independent uniform $[0,1]$ random variables then $\Pr(Z_1+\cdots+Z_r\leq \th)\leq \th^r/r!$.
\end{proof}

Suppose next that there are $k_i,1\leq i\leq L\leq \ell^{1/2}$ edges such that when the algorithm purchases them they are added to a star with $i-1$ edges already for $i=1,2,\ldots,L$. Call these {\em type $i$ edges}. If $i=1$ then then these edges are the first of their stars. When $i\geq 2$ and such an edge is added, $i-1$ new paths of length two are created. Thus 
\beq{stars}{
\frac{\ell}{16}\leq \sum_{i=2}^L (i-1)k_i\leq \ell+L\leq 2\ell,
} 
where the upper bound follows from the fact that we can stop purchasing edges once we have at least $\ell$. 

The expected cost of constructing these stars is at least 
\beq{stcost}{
a_0\sum_{i=1}^L\frac{k_i^2}{k_{i-1}n},
}
where $k_0=n$ and $a_0$ is an absolute constant. 

{\bf Explanation of \eqref{stcost}:} We can assume that the $k_i$ edges of type $i$ are chosen before any edges of type $i+1$. This provides the largest choice for each edge and thus gives a lower bound. That said, the expected cost of adding $k_i$ edges to $k_{i-1}$ vertices, by choosing from $\Omega(k_{i-1}n)$ edges is as claimed in \eqref{stcost}. 

We are therefore left with considering the following optimization problem, where $x_i=k_i/n$ and $\l=\ell/16n$: 
\beq{opt}{
\text{Minimize }\sum_{i=1}^L\frac{x_i^2}{x_{i-1}}\text{ subject to }\sum_{i=2}^L (i-1)x_i\geq \l \text{ and } 1=x_0\geq x_1\geq x_2\geq \cdots\geq x_L.
}
Observe now that 
\beq{new1}{
\sum_{i=2}^L (i-1)x_i\leq L^2x_2\text{ which implies that }x_2\geq \frac{\l}{L^2}.
}
Going back to \eqref{opt}, we have a lower bound of
\beq{new2}{
\text{Minimum: } x_1^2+\frac{x_2^2}{x_1}\text{ subject to }x_2\geq \frac{\l}{L^2}.
}
So optimizing with respect to $x_1$ for a given value of $x_2$ we get a lower bound of
$$x_2^{4/3}\brac{2^{-2/3}+2^{1/3}}\geq \bfrac{\l}{L^2}^{4/3}=\bfrac{\ell}{16L^2n}^{4/3},$$
as required in Theorem \ref{PLT}.

\subsubsection{Creating a Triangle: Upper Bound} 
We use the green edges and the case $k=1$ of Theorem \ref{rhoNk} to find a triangle by selecting a low cost edge joining the paths of length two. Notice that by construction, no two paths of length two have the same endpoints. Thus the expected cost of creating a triangle is at most
\beq{closing}{
6\bfrac{\ell}{n}^{4/3}+\frac{6}{\ell}.
}
We have a cost of $6/\ell$ because only 1/3 of the paths of length two can be completed by a green edge. Optimizing our choice of $\ell=(3/4)^{3/7}n^{4/7}$ in \eqref{closing} gives us the required upper bound in Theorem \ref{triangles}. 
\begin{remark}\label{remy}
As in Remark \ref{remx} we can replace $K_n$ by $G_{n,p}$, $p$ constant, at a cost of a factor $1/p$ in expectation and w.h.p.
\end{remark}
\subsubsection{Creating a Triangle: Lower Bound in POM}
For the lower bound we suppose that our algorithm creates $\ell$ paths of length two before adding an edge that closes a triangle. This gives us a lower bound of
$$\bfrac{\ell}{16L^2n}^{4/3}+\frac{1}{\ell(\log n)^2}\geq \frac{1}{n^{4/7}(\log n)^2}.$$
Note that the term $\frac{1}{\ell(\log n)^2}$ arises as in \eqref{infseq}.
\subsection{Creating a Triangle: Lower Bound in AOM}
The adversary's strategy is to first present all of the edges incident with vertex 1. Suppose that we accept the edges $\set{1,v},v\in A$ where $|A|=k$. The expected cost of these edges is $\approx c_k/n$. The adversary will now present all edges within $A$. We have two choices now. We can accept one of the edges within $A$ and create a triangle at expected cost
$$\Omega\brac{\frac{c_k}{n}+\frac{2c_1}{k^2}}=\Omega\brac{\frac{k^2}{n}+\frac{1}{k^2}}= \Omega\bfrac{1}{n^{1/2}}. $$
Failing this we will have to build our triangle without using vertex 1. The adversary now presents the edges incident with vertex 2 that have not been presented, and so on. We can assume inductively that this costs $\Omega\bfrac{1}{(n-1)^{1/2}}$ in expectation and we are done.

This completes the proof of Theorem \ref{triangles}.
\section{Complete Graphs Upper Bound in ROM}
Recall that $A_{r,n}$ is the expected cost of purchasing a copy of $K_r,r\geq 3$ and we will prove that $A_{r,n}=O(n^{-d_r+o(1)})$. We have already proved that we can take $d_3=4/7$ and we start an induction from here. 

We consider the first $N/2$ edges to be colored red and the remaining $N/2$ edges to be colored blue. We use Theorem \ref{rhoNk} and the red edges to construct a star $K_{1,\ell}$ centered at vertex 1, at a cost of $\frac{(2+o(1))c_\ell}{n}\leq \frac{(4+o(1))\ell^2}{n}$. Here $\ell$ is to be determined. We then use the blue edges to find a low cost copy of $K_r$ in the red neighborhood of $1$ at a cost of $A_{r,\ell}$. We need to be a little careful here as the graph induced by the blue edges is disturbed as $G_{\ell,1/2}$ i.e. is not $K_\ell$. (Note that this would not be an issue in the POM mode, as we could let the red edges be those incident with vertex 1.) Our inductive assumption is that we can find w.h.p. a copy of $K_r$ in $G_{\ell,p}$ of cost $\ell^{-d_r+o(1)}$, provided $p$ is a constant independent of $\ell,n$. The base case will be triangles as claimed in Remark \ref{remy}. Thus,
\beq{Arnb}{
A_{r+1,n}\leq \min_\ell \set{\frac{(4+o(1))\ell^2}{n}+\frac{1}{\ell^{d_r+o(1)}}}.
}
We optimize \eqref{Arnb} by choosing $\ell=n^{1/(d_r+2+o(1)}$ which gives
$$A_{r+1,n}\leq \frac{1}{n^{d_r+o(1)/(d_r+2)}}.$$
This gives us Theorem \ref{Krthm} with 
\beq{dr}{
d_{r+1}=\frac{d_r}{d_r+2}.
}
Because $d_3=4/7$, we get $d_4=2/9$ and $d_5=1/10$. Putting $d_r=\frac{1}{11\times 2^{r-5}-1}$ we get
$$\frac{1}{d_{r+1}}=1+22\times 2^{r-5}-2=11\times 2^{r-4}-1.$$

This completes the proof of Theorem \ref{Krthm}.
\section{Spanning Trees and Arborescences}
\subsection{Spanning Trees}
In this section we will prove Theorem \ref{thMST}.
\subsubsection{Spanning Tree Upper Bound}\label{STUB}
We describe an algorithm that finds a tree of expected cost strictly less than $2\z(3)$. It can be improved, but our result is not tight and we try for simplicity here. The upper bound that we compute here is in the ROM model.

To begin we choose $0<\a<1$ and $\b$ where $\a\b>1$. 

{\bf Algorithm BUYTREE}:
\begin{enumerate}[{\bf Step 1}]
\item Let $E_1=\set{e_1,e_2,\ldots,e_{\a N}}$. We go through $E_1$ in order and accept an edge if its length $X_e\leq \b/n$ and it does not make a cycle with already accepted edges. Let $F_1$ be the forest induced by the accepted edges.
\item $E_1$ induces a graph $\G_1$ that is distributed as $G_{n,m}$ where $m\approx \g n/2,\g=\a\b$. So w.h.p. $\G_1$ and hence $F_1$ has a giant (tree) component 
\beq{C1}{
\text{$C_1$ of size asymptotically equal to $\brac{1-\frac{x}{\g}}n$ and $0<x<1$ and $xe^{-x}=\g e^{-\g}$.}
}
 This follows from Erd\H{o}s and R\'enyi \cite{ER}. See also Chapter 2 of \cite{BOOK}.

Outside of $C_1$ there will w.h.p. be $\n_k$ components of size $k$ where 
\beq{nuk}{
|\n_k-n\frac{k^{k-2}}{k!}\g^{k-1}e^{-\g k}|\leq n^{2/3}\text{ for }k=1,2,\ldots,O(\log n).
}
This also follows from \cite{ER}

Complete the building of the tree by using the case $k=1$ of Theorem \ref{rhoNk} and the edges of $E(K_n)\setminus E_1$ to select a single cheap edge from each small component to the giant $C_1$. We can do this in the ROM model, attacking each 1-purchasing problem separately.
\end{enumerate}
Let us carefully examine the expected cost of the tree found by the above algorithm. We break this into $P_1+P_2$ where $P_i$ is the expected cost of the edges added in Step i=1,2. We observe first that
\beq{P1}{
P_1=\frac{\b (|F_1|)}{2n}\approx \frac{\b}{2}\brac{1-\frac{x}{\g}+\frac{x^2}{2\g}}.
}
{\bf Explanation of \eqref{P1}:} We claim that in Step 1, Algorithm BUYTREE purchases the following number of edges asymptotically:
\beq{nofedges}{
n\brac{1-\frac{x}{\g}+\frac{x^2}{2\g}}.
}
The expected cost of each edge is $\frac{\b}{2n}$ and so verifying \eqref{nofedges} will verify \eqref{P1}. For this we rely on the following identities:
\begin{align}
\frac{1}{\g}\sum_{k=1}^\infty\frac{k^{k-1}}{k!}(\g e^{-\g})^k= \begin{cases}1&\g<1.\\\frac{x}{\g}&\g>1.\end{cases}\label{nov}\\
\frac{1}{\g}\sum_{k=1}^\infty\frac{k^{k-2}}{k!}(k-1)(\g e^{-\g})^k= \begin{cases}\frac{\g}{2}&\g<1.\\\frac{x^2}{2\g}&\g>1.\end{cases}\label{noe}
\end{align}
When $\g<1$, equation \eqref{nov} is derived from the fact that the expected number of vertices on tree components is asymptotically $n$. When $\g>1$ we replace $(\g e^{-\g})^k$ by $(xe^{-x})^k$ to get the value $x/\g$.

When $\g<1$, equation \eqref{noe} is derived from the fact that the expected number of edges on tree components is asymptotically $\g n/2$.  When $\g>1$ we replace $(\g e^{-\g})^k$ by $(xe^{-x})^k$ to get the value $x^2/2\g$.

We see from \eqref{nov} that the number of edges in the giant component $C_1$ is w.h.p. asymptotically equal to $n\brac{1-\frac{x}{\g}}$. We then see from \eqref{noe} that w.h.p. the number of edges selected that are in small trees is asymptotically equal to $nx^2/2\g$. There are w.h.p. $o(n)$ edges in $F_1$ that are obtained as spanning trees of small components with $s$ vertices and $\geq s$ edges.

As for the cost of Step 2, we have that 
\beq{P2}{
P_2\approx \frac{2}{(1-\a)\brac{1-\frac{x}{\g}}}\sum_{k=1}^\infty\frac{k^{k-3}}{k!}\g^{k-1}e^{-\g k}.
}
{\bf Explanation of \eqref{P2}:} The algorithm seeks a low cost edge between each small tree $T$ in $F_1$ to the giant component $C_1$. Observe first that there are w.h.p. 
$$m_1\approx (1-\a)|C_1|(n-|C_1|)$$ 
edges between $C_1$ and $\bar{C}_1=[n]\setminus C_1$. Indeed, for a given small constant $\e>0$ and set $S\subseteq [n]$ with $\e n\leq |S|\leq (1-\e)n$ let $\bar{e}(S)$ be the number of edges in $G_{n,(1-\a)}$ that belong to $S:\bar{S}$, the edges of $K_n$ between $S$ and $\bar{S}$. Then the Chernoff bounds \eqref{chern1}, \eqref{chern2} imply that
\beq{Sedge}{
\Pr(|\bar{e}(S)-(1-\a)|S|(n-|S|)|\geq n^{3/2}\log n)\leq 2\exp\set{-\frac{n^3(\log n)^2}{3|S|(n-|S|)(1-\a)}}\leq e^{-n(\log n)^2}.
}
The RHS of \eqref{Sedge} can be inflated by $2^n$ to account for all possible choices of $S$ and so w.h.p. $m_1$ is as claimed.

Note next that these $m_1$ edges are distributed uniformly over the set of $|C_1|(n-|C_1|)$ edges in $K_n$ with one endpoint in $|C_1|$. This is because given the fact that there are no $\G_1$ edges between $C_1$ and $\bar{C}_1$, we can interchange $E_1$ edges with non-$E_1$ edges within $S:\bar{S}$ without changing $F_1$. So, each such set of $m_1$ edges is equally likely. Under these circumstances, the number of edges between a small tree of size $k$ and $C_1$ is distributed as a hypergeometric with mean $(1-\a)k|C_1|\pm O(n^{1/2}\log n)$, given \eqref{Sedge}. As such, this will be concentrated around its mean (see Section 6 of Hoeffding \cite{Hoef}). Thus, using the case $k=1$ of Theorem \ref{rhoNk}, we see that the expected cost of connecting a tree with $k$ vertices to $C_1$ is at most $\approx\frac{2}{(1-\a)k|C_1|}$. Equation \eqref{P2} now follows from \eqref{C1} and \eqref{nuk}.

Putting $\a=0.69$ and $\b=3.5$ gives a total cost of less than $2.31$ which is less than $2\z(3)$. This verifies the upper bound in Theorem \ref{thMST}.

This bound can be improved by 
\begin{enumerate}
\item Using a sequence $\a_1,\b_2,\a_2,\b_2,\ldots,$ 
\item Using edges between small components in Step 2.
\end{enumerate}
Finally, in the AOM model, we can appeal to Theorem \ref{TSP} and find a Hamilton path at expected cost $O(1)$.
\subsubsection{Spanning Tree Lower Bound}\label{STLB}
In this section, we use the result of Theorem \ref{average} to obtain a lower bound on the expected cost of purchasing a spanning tree. Suppose we have an algorithm $A$. We apply it first to the edges contained in $S_1=[n+1]\setminus \set{x}$ where $x$ is chosen uniformly at random from $[n+1]$. Suppose that $A$ produces a spanning tree $T_1$ of $S_1$. Now use the algorithm of Section \ref{more} to find two edges $e_1,e_2$ from $x$ to $S_1$ to create a set of edges $T_2$. Now consider a fixed vertex $v$. Its degree in the random unicyclic graph $T_2$ is at least one and averages two. Applying Theorem \ref{average} we see that the expected cost of $T_2$ is at least 1.25, since the cost of each edge is counted exactly twice here. It follows that the expected cost of $T_1$ is at least $1.25-O(1/N)$ and the lower bound in Theorem \ref{thMST} follows.
\subsubsection{Improved Spanning Tree Lower Bound}\label{STLB1}
The proof we outline here is from Aldous, Angel and Berestycki \cite{AAB}. It sharpens the lower bound in Theorem \ref{average}. Consider an online algorithm.  As in the proof of Theorem \ref{average}, let $T$ be the index of the first selected item and let $q_k = \Pr(T > k)$ be the probability that the first $k$ edges are all rejected. This sequence is decreasing from $q_0=1$ to
$q_n=0$. The threshold $\theta_k$ for accepting the $k$'th edge is given by $\theta_k =\frac{q_{k-1}-q_k}{q_{k-1}}$. Conditional on $\{T \geq k\}$, the expected cost of accepting item $k$   equals $\theta_k^2/2$.
 
On the event $\{T = k\}$ let $a_k$ be the conditional expected number of subsequent edges accepted.  The conditional expected cost of the subsequent edges is at least $a_k\times \frac{a_k}{2(n-k)}$, since we must accept edges of cost $\leq\frac{a_k}{n-k}$ to get $a_k$ edges on average. Thus 
 \[ \E \mbox{(total  cost )} \ge \sum_{k=1}^n (q_{k-1}-q_k)\left(\frac{q_{k-1}-q_k}{2q_{k-1}} +    \frac{a_k^2}{2(n-k)} \right). \]
Any choice of 
\beq{con1}{
1 = q_0 \geq q_1 \geq \ldots \geq q_n = 0; \quad  0 \le a_k \le n-k, \ 1 \le k \le n
}
is feasible, and the constraint $\E |S| = 2$ becomes the constraint 
\begin{equation}\label{con2}
\sum_{k=1}^n (q_{k-1}-q_k)a_k  = 1.
\end{equation}
Thus if $\f_n$ is as in Theorem \ref{average} then
\[\f_n \geq\min \sum_{k=1}^n (q_{k-1}-q_k)\left(\frac{q_{k-1}-q_k}{2q_{k-1}} + \frac{a_k^2}{2(n-k)} \right) \]
minimized over $(q_k)$ and $(a_k)$ satisfying \eqref{con1}, \eqref{con2}. A somewhat difficult analysis gives $\f_n\geq 2.73747$.  

This completes the proof of Theorem \ref{thMST}.
\subsection{Spanning Arborescence}
In this section, we prove Theorem \ref{tharb}. 
\subsubsection{Spanning Arborescence Upper Bound in ROM}\label{arup}
We let $\e=1/\log n$ and consider the first $(1-\e)n(n-1)$ edges of $\dK_n$ to be colored red and the remaining $\e n(n-1)$ edges to be colored blue. Then for each vertex $v$ we use the red edges and case $k=1$ of Theorem \ref{rhoNk} to construct a random mapping digraph $D_f$ with vertex set $[n]$ and edges $\set{(v,f(v):v\in [n])}$ where $f(v)$ is $v$'s selection. The total expected cost of these edges is $\approx 2$. 

\begin{remark}\label{remD}
It will be important to examine the correlation between the edges of the digraph $D_f$ and the blue edges. What we claim is that one can see from the construction of $f(v)$ for $v\in [n]$ that if there are $k_v$ blue edges directed out of $v$, then these form a uniform random choice from $\binom{[n]\setminus\set{v}}{k_v}$. Furthermore, $B_v,B_w$ will be independent if $v\neq w$.
\end{remark}

It is known, see Chapter 15 of \cite{BOOK} or Chapter 14 of \cite{Boll} that if $Z$ is the number of components of $D_f$ then $\E(Z)\approx\Var(Z)\approx \frac12\log n$. Thus, the Chebyshev inequality implies that $Z$ is concentrated around its mean. We will however need an upper bound on $Z$ that holds with probability $1-o(1/n^3)$. It is known (see e.g. Theorem 15.1 of \cite{BOOK}) that the probability generating function (p.g.f.) $G_n$ of $Z$ is given by $G_n(x)=\E(x^Z)=\frac{x(x+1)\cdots(x+n-1)}{n!}$. So, for any positive integer $u$ we have that 
$$\Pr(Z\geq u)=\Pr(2^Z\geq 2^u)\leq \frac{\E(2^Z)}{2^u}= \frac{2\cdot3\cdots(2+n-1)}{2^un!}=\frac{(n+1)!}{2^un!}=\frac{n+1}{2^u}.$$
We take $u=(\log n)^2$ and find that $\Pr(Z\geq(\log n)^2)=o(n^{-3})$ as required.

The digraph $D_f$ consists of $Z$ digraphs $D_1,D_2,\ldots,D_Z$, each of which can be described as a directed cycle $C_i,i=1,2,\ldots,Z$ with oriented trees attached to each vertex of each cycle. The edges of the trees are oriented towards the cycle. We first delete an edge $(x_i,y_i)$ from each cycle. We are now left with $Z$ trees rooted at $X=\set{\r_1,\r_2,\ldots,\r_Z}$. We will use the blue edges to merge the trees into a spanning arborescence. We do this by adding a blue edge from the root of one of the trees to a vertex in another tree. This has the effect of reducing the number of trees by one.

We go through the blue edges in the given random order. Suppose we have examined $t-1$ blue edges. Let $X(t)$ denote the current set of roots of components. We stop when $|X(t)|=1$. For $\r\in X(t)$ we let $V_\r$ denote the set of vertices of the tree that contains $\r$. If the $t$th edge $e=(\r,\s)$ is such that (i) $\r\in X$, (ii) $\s\notin V_\s$ and (iii) the cost $x_e\leq n^{-3/4}$ then we purchase $e$ and reduce $X(t)$ by one. If we are successful in reducing $|X(t)|$ to one, in this way, then we will entail an additional cost of $(\log n)^2\times n^{-3/4}$, which is negligible.

Now suppose that $|X(t)|=k$ and that the components are $T_1,T_2,\ldots,T_k$ and the tree sizes are $m_1,m_2,\ldots,m_k$. Then the probability that the $t$th blue edge $e=(\r,\s)$ is {\em good} i.e. joins a root to a vertex in a different tree is at least
$$\frac{n-2-m_1}{n^{11/4}}+\cdots+\frac{n-2-m_k}{n^{11/4}}= \frac{k-1}{n^{7/4}}-\frac{2k}{n^{11/4}}\geq \frac{k-1}{2n^{7/4}}.$$
To see this, observe that there is a $1/n$ chance that $\r$ is the root of $T_1$. There is then the probability that $\t\notin T_1$ is at least $(n-2-m_1)/(n-2)$ (-2 as opposed to -1 from avoiding $f(\r)$.) Finally, there is an $n^{-3/4}$ chance that the cost is at most $n^{-3/4}$.

So, with probability $1-o(n^{-3})$, the number of blue edges needed is dominated by the sum $\Upsilon$ of $\ell=(\log n)^2$ independent geometric random variables with success probabilities $\l_k=\frac{k-1}{2n^{7/4}},k=2,3,\ldots,(\log n)^2$. The geometric random variable $Geo(\l)$ has p.g.f. $\frac{\l x}{1-(1-\l)x}$. The p.g.f. of $\Upsilon$ is therefore $\prod_{k=1}^\ell\frac{\l_kx}{1-(1-\l_k)x}$. So, for any $u>0$,
\begin{multline}
\Pr(\Upsilon\geq u)=\Pr\brac{\brac{1+\frac{1}{4n^{7/4}}}^\Upsilon\geq \brac{1+\frac{1}{4n^{7/4}}}^u}\leq\\ 
\brac{1+\frac{1}{4n^{7/4}}}^{-u}\prod_{k=2}^\ell \frac{\l_k\brac{1+\frac{1}{4n^{7/4}}}}{1-\brac{1-\l_k}\brac{1+\frac{1}{4n^{7/4}}}}\leq \exp\set{-\frac{u}{5n^{7/4}}}\prod_{k=2}^\ell\frac{1}{1-\frac{1}{\l_k(4n^{7/4}+1)}}\\
\leq \exp\set{-\frac{u}{5n^{7/4}}}\prod_{k=2}^\ell\brac{1+\frac{1}{k-1}}\leq \exp\set{-\frac{u}{5n^{7/4}}+1+\log\ell}.
\end{multline}
Putting $u=100\log n$ we see that with probability $1-o(n^{-3})$, we only need $O(n^{7/4}\log n)$ blue edges. On the other hand we have $\Omega(n^2/\log n)$ available, with this probability.

Finally, in the AOM model, we can appeal to the directed case of Theorem \ref{TSP} and find a Hamilton path at cost $O(1)$, w.h.p.
\subsubsection{Spanning Arborescence Lower Bound in RAM}\label{alb}
For the lower bound we color the edges red and blue as in Section \ref{arup}. Then if we want to solve the one-purchase problem of Section \ref{k1}, we assume the edges out of vertex 1 have the costs $x_1,x_2,\ldots,x_N$ where $N$ is the blue out-degree of vertex 1. An algorithm for finding an arborescence will provide a solution to the one-purchase problem, unless 1 is the root of the arborescence, which happens with probability $1/n$. In this case we just purchase a red edge of cost $\leq 2\log n/n$. If the expected cost of the arborescence found was $c<2$, then we would have a solution to the one-purchase problem with expected cost at most $c+2\log n/n^2$.

This completes the proof of Theorem \ref{tharb}.
\section{Perfect matchings}
We deal with bipartite and non-bipartite separately.
\subsection{Perfect Matchings in $K_{n,n}$}
In this section, we prove Theorem \ref{BPM}.
\subsubsection{Perfect Matching Lower Bound} 
For the lower bound observe that if $\m_n\leq c<2$ for some constant $c$ then the average cost of each edge in an optimal algorithm is at most $c/n$. But then, just as in Section \ref{alb}, we could use this to give an algorithm to improve the upper bound for $k=1$ in Theorem \ref{rhoNk}. Given $x_1,x_2,\ldots,x_n$ we would simply make these values the costs of the edges incident to vertex $1\in U$. The matching edge incident with vertex 1 would have expectation $c/n$.
\subsubsection{Perfect Matching Upper Bound}\label{BPM0}
For the upper bound we will use the following result of Walkup \cite{Walk}: Suppose that we label the partition of the vertex set of $K_{n,n}$ as $U,V$. Let $B_{k-out}$ denote the random bipartite graph where each vertex independently chooses $k$ random neighbors from the opposite part of the bipartition giving a graph with $2kn-O(1)$ distinct edges in expectation. Note that $B_{k-out}$ is chosen uniformly from some set of bipartite digraphs $\Omega_k$. Then
\beq{Wres}{
\Pr(B_{3-out}\text{ does not have a perfect matching})=o\bfrac{1}{n}.
}
Walkup actually proved that $B_{2-out}$ has a perfect matching w.h.p., but the failure probability is too high for our application.

To apply this result, we build something close to $B_{3-out}$ as follows: As we see an edge, we color it red or blue with probability 1/2. A vertex $v\in U$ uses the red edges incident with it and the algorithm of Theorem \ref{rhoNk} to choose three edges incident with it. A vertex $w\in V$ uses the blue edges incident with it and the algorithm of Theorem \ref{rhoNk} to choose three edges incident with it. Let $H$ be the bipartite graph created. 
\beq{costH}{
\text{The expected total cost of the edges in $H$ is $\approx 2n\times c_3/(n/2)=4c_3$}
}
as claimed in Theorem \ref{BPM}. It will have a perfect matching with probability $1-o(1/n)$. 

There is a small point to clarify. The bipartite graph produced by the algorithm is $B_{3-out}$ with the condition that no edge is chosen by both of its endpoints. The expected number of edges chosen twice in $B_{3-out}$ is at most $n^2\times (2/n)^2=6$. By computing higher moments we see that the probability no edge is chosen twice is $\approx e^{-6}$ and so the probability there is no perfect matching in the algorithm's graph is at most $e^6+o(1)$ times the probability there is no perfect matching in $B_{3-out}$.

This completes the proof of Theorem \ref{BPM}.
\subsection{Perfect Matchings in $K_{n}$}
We can assume that $n=2m$ is even. We can then look for a perfect matching between $[1,m]$ and $[m+1,2m]$. For this we can use the approach of Section \ref{BPM0} and this will give us an upper bound of $\approx 4c_3$. For a lower bound we can use $\approx n/2\times 2/(n/2)=2$, since we could use any algorithm for finding a perfect matching to find a solution to the one-purchase problem in Section \ref{k1} as we did for the lower bound in Section \ref{BPM}. 

This completes the proof of Theorem \ref{PM}.
\section{The Traveling Salesperson Problem}
\subsection{TSP Upper Bound} 
For this we first replace Walkup's result by the following: let $G_{k-out}$ be the random graph constructed by allowing each vertex to independently choose $k$ random neighbors. It is known, Bohman Frieze \cite{BF} that w.h.p. $G_{k-out}$ is Hamiltonian w.h.p. if $k\geq 3$. We want this probability to be $1-o(1/n)$ and to be sure of this we can use the result of Frieze \cite{FA} where we take $k=10$. 

We then write the uniform $[0,1]$ random variables $X_e$ as $X_e=\min\set{Z_{e,j}:j=1,2,\ldots,10}$ where the $Z_{e,j}$ are independently distributed as the random variable $Z\in[0,1]$ where $\Pr(Z\geq x)=(1-x)^{10}$. Assume first that when we examine an edge, we see these 10 values. Next, for $j=1,2,\ldots,10$ we use Corollary \ref{cor1} to choose an edge $e_{v,j}$ for each $v\in [n]$. The edges chosen will create a graph distributed as $G_{10-out}$ and the total cost will be at most $\approx10\times10\times2=200$. The first 10 arises because we do this for 10 values of $j$. The next 10 arises because the density of $Z$ near zero is $10x+o(x)$ and the 2 arises because $c_1=2$.

Of course when we examine an edge, we only see one value, $\xi$ say. To get around this, we generate another 9 values of $Z$, viz. $Z_2,Z_3,\ldots,Z_{10}$, but we condition here on $Z_j\geq \xi,j=2,3,\ldots,10$.  
\subsection{TSP Lower Bound} 
For a lower bound, we see that we could use an algorithm for finding a low cost Hamilton cycle to find a solution to the 2-purchase problem in Theorem \ref{rhoNk}. 

This completes the proof of Theorem \ref{TSP}.
\subsection{Directed case of TSP}
In this case we replace the results of \cite{BF}, \cite{FA} by the result of Cooper and Frieze \cite{CF}. We show in this paper that w.h.p. the random digraph $G_{2-in,2-out}$ is Hamiltonian. Here each vertex independently chooses 2 random out-neighbors and 2 random in-neighbors. To apply this, we replace each uniform edge-cost $X_e.e=(u,v)$ by $\min\set{Z_{e,j}:j=1,2,\ldots,4}$ where the $Z_{e,j}$ are independently distributed as the random variable $Z\in[0,1]$ where $\Pr(Z\geq x)=(1-x)^{4}$. We will then use $Z_{e,1},Z_{e,2}$ to give 2 random out-neighbors to $u$ and $Z_{e,3},Z_{e,4}$ to give 2 random in-neighbors to $v$. The rest of the argument is as for the undirected case. For the lower bound we see that each vertex chooses an in-neighbor and an out-neighbor. For the upper bound we replace $\approx10\times10\times2$ by $\approx 2\times2\times c_2$. We cannot claim a bound in terms of expectation because the proofs in \cite{CF} and the related \cite{CF1} do not give a small enough probability of failure.

This completes the proof of Theorem \ref{DTSP}.
\section{Final Remarks}
We have described several problems that can be analyzed within our framework. It would be of some interest to
\begin{enumerate}
\item Tighten the bounds, especially for minimum spanning trees. 
\item Replace $K_n$ by other graphs.
\item Exend the analysis to hypergraph structures.
\end{enumerate}
{\bf Acknowledgement:} We thank Colin Cooper for his comments and his contribution to the proof of Theorem \ref{rhoNk}.

\end{document}